\newtheorem{definition}{Definition}
\newtheorem{proposition}{Proposition}
\newtheorem{corollary}{Corollary}
\newtheorem{lemma}{Lemma}
\newtheorem{theorem}{Theorem}
\DeclareMathOperator*{\argmax}{arg\,max}
\DeclareMathOperator*{\argmin}{arg\,min}
\newcommand\eqnumber{\addtocounter{equation}{1}\tag{\theequation}}
\newcommand{\de}{\delta}
\newcommand{\hV}{\mathscr{V}}
\newcommand{\hA}{\mathscr{A}}
\newcommand{\hT}{\mathscr{T}}
\newcommand{\hF}{\mathscr{F}}
\newcommand{\hS}{\mathscr{S}}
\newcommand{\hI}{\mathscr{I}}
\newcommand{\ie}{\textit{i.e.,}}
\newcommand{\eg}{\textit{e.g.,}}
\newcommand{\rr}{r}
\newcommand{\br}{\bar r}
\Crefname{appsec}{Appendix}{Appendices}
\begin{document}
%
% paper title
% Titles are generally capitalized except for words such as a, an, and, as,
% at, but, by, for, in, nor, of, on, or, the, to and up, which are usually
% not capitalized unless they are the first or last word of the title.
% Linebreaks \\ can be used within to get better formatting as desired.
% Do not put math or special symbols in the title.
\title{Smoothed Least-Laxity-First Algorithm for EV Charging}
%
%
% author names and IEEE memberships
% note positions of commas and nonbreaking spaces ( ~ ) LaTeX will not break
% a structure at a ~ so this keeps an author's name from being broken across
% two lines.
% use \thanks{} to gain access to the first footnote area
% a separate \thanks must be used for each paragraph as LaTeX2e's \thanks
% was not built to handle multiple paragraphs

\author{Niangjun~Chen,~\IEEEmembership{Member,~IEEE,}
Christian~Kurniawan,
Yorie~Nakahira,~\IEEEmembership{Member,~IEEE,}
    Lijun~Chen,~\IEEEmembership{Member,~IEEE,}
        Steven~H.~Low,~\IEEEmembership{Fellow,~IEEE}% <-this % stops a space
\thanks{N. Chen is with the Department of Information Systems Technology and Design, Singapore University of Technology and Design, Singapore 487372. N. Chen has join appointment with the Institute for High Performance Computing, Agency for Science, Technology, and Research, Singapore 138632. (e-mail: niangjun\_chen@sutd.edu.sg, chennj@ihpc.a-star.edu.sg)}% <-this % stops a space
\thanks{C. Kurniawan and Y. Nakahira are with the Department
of Electrical and Computer Engineering, Carnegie Mellon University, PA 15213, USA. (e-mail: yorie@cmu.edu, christian.paryoto@gmail.com)}% <-this % stops a space
\thanks{L. Chen is with the Department of Computer Science, University of Colorado Boulder, CO, 80309 USA. (e-mail: lijun.chen@colorado.edu)}
\thanks{S. H. Low is with the Department
of Computing and Mathematical Sciences, California Institute of Technology,
CA, 91125 USA. (e-mail: slow@caltech.edu)}}

\maketitle

% As a general rule, do not put math, special symbols or citations
% in the abstract or keywords.
\begin{abstract}
Adaptive charging can charge electric vehicles (EVs) at scale cost effectively, despite the uncertainty in EV arrivals. We formulate adaptive EV charging as a feasibility problem that meets all EVs' energy demands before their deadlines while satisfying constraints in charging rate and total charging power. We propose an online algorithm, smoothed least-laxity-first (sLLF), that decides the current charging rates without the knowledge of future arrivals and demands. We characterize the performance of the sLLF algorithm analytically and numerically. Numerical experiments with real-world data show that it has a significantly higher rate of feasible EV charging than several other existing EV charging algorithms. Resource augmentation framework is employed to assess the feasibility condition of the algorithm. The assessment shows that the sLLF algorithm achieves perfect feasibility with only a 0.07 increase in resources.
\end{abstract}

% Note that keywords are not normally used for peerreview papers.
\begin{IEEEkeywords}
power generation scheduling, scheduling, road vehicle power systems, resource management, battery chargers
\end{IEEEkeywords}

% For peer review papers, you can put extra information on the cover
% page as needed:
% \ifCLASSOPTIONpeerreview
% \begin{center} \bfseries EDICS Category: 3-BBND \end{center}
% \fi
%
% For peerreview papers, this IEEEtran command inserts a page break and
% creates the second title. It will be ignored for other modes.
\IEEEpeerreviewmaketitle

\section{Introduction}
\label{S:Introduction}
% The very first letter is a 2 line initial drop letter followed
% by the rest of the first word in caps.
% 
% form to use if the first word consists of a single letter:
% \IEEEPARstart{A}{demo} file is ....
% 
% form to use if you need the single drop letter followed by
% normal text (unknown if ever used by the IEEE):
% \IEEEPARstart{A}{}demo file is ....
% 
% Some journals put the first two words in caps:
% \IEEEPARstart{T}{his demo} file is ....
% 
% Here we have the typical use of a "T" for an initial drop letter
% and "HIS" in caps to complete the first word.
%\IEEEPARstart{T}{his} 

\IEEEPARstart{T}{he} electrification of transportation provides an important opportunity for energy efficiency and sustainability. There were over seven million of pure and hybrid electric vehicles (EVs) worldwide as of 2019~\cite{EVoutlook}, and EV proliferation is expected to accelerate for many years to come. EV charging at scale, however, presents a tremendous challenge as uncontrolled EV charging may strain the power grid and cause voltage instability. One way to mitigate the impact and stabilize the power grid as well as to manage uncertainty in the energy supply from renewable energy resources such as wind power and solar energy is by exploiting the flexibility in charging time and rate. To exploit and optimize this flexibility, many EV charging algorithms have been proposed.

There is a very large literature on EV charging algorithms and they can be categorized as either offline or online. The offline algorithms require complete information on all EVs to decide the charging rates, \eg~\cite{sundstrom2010planning,6670091,richardson2012local,gan2013optimal, ma2010decentralized, chen2014electric}. Yet, information on future EV arrivals may not be available or expensive to obtain, which motivates the development of online algorithms, \eg~\cite{gan2013real, chen2012large, yu2016deadline,Stein:2012:MOM:2343776.2343792,sha2004real,6670091,carvalho2015congestion,7809844,Guo}. However, an online algorithm, which uses only information from EVs present at the charging station to decide their charging rates, may not produce a solution that satisfies all the constraints even when all EVs' demands can be satisfied. Thus, the efficacy of these online algorithms still depends on the accurate prediction of EV arrivals and energy demands that is difficult to obtain. The optimum charging rate is obtained by solving either a convex optimization (\eg \cite{AlonsoOptimalAlgorithm2014,MaoIntelligentModes2019,ElmehdiGeneticFleet2019}) or a linear programming problem (\eg \cite{Guo}).
To reduce the computational complexity and memory usage, sorting or bisection based methods (\eg earliest-deadline-first, least-laxity-first~\cite{stankovic2012deadline}, and Whittle's index policy~\cite{whittle1988restless,yu2016deadline}) are often employed. Nevertheless, the lack of information on future EV arrivals remains the major challenge for solving the problem. Moreover, these algorithms require temporal coordination across time among a large number of EVs which is hard. 

In view of these limitations, we investigated low-complexity EV charging that does not require prediction of EV arrivals/demands or temporal coordination. We first formulated the charging rate allocation as a feasibility problem \textbf{to satisfy the energy demands of all EVs before their departure under constraints of individual maximum charging rate of every EV and the total available power supply.} We then proposed an online algorithm, the smoothed least-laxity-first (sLLF), based on the classic least-laxity-first (LLF) with an improved success rate in achieving feasibility, that decides on the current charging rates based only on the information up to the current time. Without information on future EV arrivals, the sLLF algorithm makes the best possible decision by maximizing the minimum resulting laxity for the next time among the EVs currently in the system. Here, laxity can be seen as the feasibility margin for EV charging and is defined as the EV's remaining time at the charging station decreased by the time needed to be fully charged at the maximum charging rate. By considering only the EVs up to the current time, an (offline) feasible instance may be (online) infeasible under sLLF. Additionally, unlike the classic LLF algorithm, the sLLF algorithm avoids unnecessary oscillations in the charging rates.

Cost related to the installation, replacement, and development of both the infrastructure of a charging station including power generation and the battery of an EV is also a factor to be considered in a charging algorithm \cite{BrennaElectricEstimation2020}. Generally, the algorithm needs to adhere to the limitation of the resources while still producing a feasible solution \cite{AmjadATechniques2018}. Thus, the feasibility condition of an algorithm can be assessed by characterizing the minimum amount of additional resources (\ie total power supply and charging rates) that will allow the algorithm to produce a feasible solution, assuming all EVs' demands can be satisfied. In this study, the feasibility condition of the sLLF algorithm is analyzed using the resource augmentation framework~\cite{kalyanasundaram2000speed,Phillips2002, im2014competitive, im2014selfishmigrate}. Resource augmentation is a prominent analysis framework for analyzing the performance of online algorithms for multiprocessor scheduling~\cite{liu1973scheduling,dertouzos1989multiprocessor,davis2011survey}. We apply this framework to the EV charging problem that can be viewed as a deadline scheduling problem by considering chargers as processors and EVs with certain energy demand as jobs. Contrary to the traditional application of the framework, in our setting the power limit is time-varying, the maximum rates are heterogeneous, and the power limit may not necessarily be an integer multiplication of the maximum rate. Our work is the first to extend resource augmentation into the cases for heterogeneous processors with a time-varying number.

We further carried out numerical experiments using real-world datasets and showed that sLLF has a significantly higher rate of generating feasible EV charging schedules than several other common EV charging algorithms. This is expected, as the sLLF algorithm tries to leave the largest feasibility margin, so it can best accommodate arbitrary future EV arrivals. The datasets we employed are collected from Google’s facilities in Mountain View (Google\_mtv) and Sunnyvale (Google\_svl) as well as the adaptive charging network (ACN) testbed we deployed at California Institute of Technology (Caltech), called CAGarage. At Caltech ACN, each EV arrives at a charger with energy demand and departure time. The charging facility also has a time-varying total power supply. The ACN performs real-time sensing, communication, and control using the profiles of each EV (including energy demand, departure time, and maximum charging rate) to decide the charging rate of each EV.
See \cite{Lee2016, ZLee2018} for more details on the Caltech infrastructure and \cite{LeeLiLow2019a} on the charging data.

The rest of the paper is organized as follows: \cref{S:Model and Algorithm} introduces the system model and proposes the sLLF algorithm; the performance of the sLLF algorithm is analyzed via the procedure describes in \cref{S:Assessment Procedure}; then the result and discussion of the performance analysis are presented in \cref{S:Performance Evaluation and Comparison}.

\section{Model and Algorithm}
\label{S:Model and Algorithm}

\subsection{System Model}
\label{SS:System Model}

In this study, we consider a system with one charging station that serves a set of EVs, indexed by $i\in\hV=\{1,\ 2,\ 3,\ \cdots \}$. We use a discrete-time model where time is divided into slots of equal sampling intervals, indexed by $t \in \hT=\{ 0,\ 1,\ 2,\ \cdots , T\}$. EV $i$ arrives at the charging station with an energy demand $e_i$ at time $a_i$, and departs from the station at time $d_i$\footnote{Each EV leave at its departure time regardless of its charging conditions. This assumption is applicable for most slow chargers including ACN~\cite{Lee2016}. Under this assumption, we do need to explicitly model the number of stations, as the speed of charging does not affect the availability of chargers for incoming EVs.}. During its stay at the station, the EV is charged at a rate (or power) of $r_i(t) \geq 0,\ a_i\leq t< d_i$. For convenience, since the rate is adjusted at a discrete-time which make the value of the charging rate of vehicle $i$ at a given time is equal to the energy transmitted to the vehicle, we extend the definition of $r_i(t)$ to the entire temporal domain such that the notation $r_i(t)$ can be interpreted as both the charging rate of and the energy transmitted into the vehicle $i$ at time $t$. We also denote the set of the remaining EVs in the charging station at time $t$ as $V_t= \{ i \in A_t : a_i \leq t < d_i \}$ and the remaining energy demand of EV $i$ at time $t$ as $e_i(t)$. The notations are summarized in \cref{table:notations 1}.

Constraints of the system need to be defined to satisfy the charger and power supply limitation along with the vehicle's energy demand. To account for the limitations in the charger or battery of an EV, each EV $i$ can only be charged up to a peak rate, $\br_i$:

\begin{equation}
	\label{eq:rate constraint}
	\begin{aligned}
	& & &
	\begin{cases}
		0 \leq r_i (t)  \leq \br_i , & t \in [a_i  , d_i ), i\in\hV  \\
		r_i (t) = 0, & t \notin [a_i  , d_i ), i\in\hV 
	\end{cases} \\
	& \text{where} & & \\
	& & & \br_{\min} \leq \br_i \leq \br_{\max},\ i\in\hV
	\end{aligned}
\end{equation}

\noindent 
To account for the limitations in the power grid or station, the charging station has a (possibly time-varying) power limit, $P(t)$, such that\footnote{All EVs at the charging station can be simultaneously charged as long as the constraints \cref{eq:rate constraint,eq:power constraint} are satisfied.}

\begin{equation}
	\label{eq:power constraint}
	\begin{aligned}
	& & & \sum_{ i \in \hV} r_i (t)  \leq P(t),\ t\in\hT \\
	& \text{where} & & 0 \leq P_{\min} \leq P(t) \leq P_{\max}
	\end{aligned}
\end{equation}

\noindent
Finally, every EV's energy demands need to be satisfied\footnote{The actual constraint in ACN is $\sum_{t\in\hT} \de r_i (t) = e_i,~i\in \hV $, where $\de$ is the sojourn time in unit of hours of sampling time intervals, $e_i$ has unit of $kWh$, $r_i(t)$ has unit of $kW$~\cite{Lee2016}. Since $r_i(t)$ can always be rescaled according to $\de$, we set $\de = 1$ without loss of generality.}
 
\begin{equation}
	\label{eq:energy demand constraint}
	\sum_{t\in\hT} r_i (t)  = e_i,~i\in \hV
\end{equation}

The charging problem instance is then defined as a quintuple ${\hI}=\{a_i, d_i, e_{i}, \bar{r}_{i}; P(t)\}_{i\in\hV, t\in\hT}$. The primary goal of EV charging is to satisfy every EV's energy demands under the above power supply and peak charging rate constraints (\cref{eq:rate constraint,eq:power constraint,eq:energy demand constraint}). 

\begin{definition}[feasible instance]
	\label{definition:feasible instance}
	An EV charging problem instance ${\hI}$ is \textit{offline feasible} if there exist charging rates $r = \{ r_i (t) : i \in \hV, t \in \hT \}$ that satisfy \cref{eq:rate constraint,eq:power constraint,eq:energy demand constraint}. 
\end{definition}

\noindent
Because \cref{eq:rate constraint,eq:power constraint,eq:energy demand constraint} are affine, verifying the feasibility of an EV charging instance is a linear programming (LP) problem for which many efficient algorithms exist.

\begin{table}[htbp]
	\normalsize
	\centering
	\rowcolors{2}{white}{gray!25}
	\caption{notations}
	\label{table:notations 1}
	\begin{tabularx}{0.475\textwidth}{cX}
		\hline
		\textbf{notation} & \multicolumn{1}{c}{\textbf{description}} \\
		\hline
		\hline
		$\hV$ & set of EVs in the system  \\
		$\hV_t$& set of EVs remain in the charging station at time $t$ \\
		$\hT$ & set of the system's discrete-time  \\
		$a_i$ & arrival time of EV $i$ \\
		$d_i$ & departure time of EV $i$ \\ 
		$e_i$ & energy demand of EV $i\in\hV$ \\
		$e_i(t)$ & remaining energy demand of EV $i$ at time $t\in\hT$ \\
		$r_i (t) $ & charging rate of or energy transmitted into EV $i$ at time $t$ \\
		$P (t)$ & power limit of the charging station at time $t$ \\ 
		${\hI}$ & EV charging problem instance\\
		\hline
	\end{tabularx}
\end{table}

\subsection{Online Scheduling}
\label{SS:Online Scheduling} 

In practice, information of the energy demand and departure time of an EV is only available after its arrival\footnote{In ACN, information of the energy demand and departure time is gathered from user inputs upon arrival.}. Consequently, the charging station need to employ an online algorithm to determine the current charging rate of EV $i$ at time $t$, $r_i(t)$, given information only up to the current time by mapping it into

\begin{equation}
	\label{eq:current instance}
	\begin{aligned}
		& & \hI_t & = \{a_i, d_i, e_{i}(\tau), \bar{r}_{i} ; P(\tau)\}_{i\in \hV_t, \tau\leq t} \\
		\text{where} & & e_{i}(\tau) & = e_{i} - \sum_{j = 0}^{\tau-1} r_i(j)
	\end{aligned}
\end{equation}

\begin{definition}[online algorithm]
	\label{definition:online algrithm}
	An online algorithm is a sequence of functions, $\hA = \{\hA_t\}$, where each function, $\hA_t : {\hI_t} \rightarrow {r(t)}$, maps the information up to the current time, $\hI_t$, to the current charging rates, $r (t)= \{r_i(t)\}_{i\in \hV_t }$. 
\end{definition}

\begin{definition}[feasibility of an algorithm]
	\label{definition:feasibility of an algorithm}
	An (online) algorithm, $\hA$, is (online) feasible on instance $\hI$ if it gives charging rates that satisfy constraints defined in \cref{eq:rate constraint,eq:power constraint,eq:energy demand constraint}.
 \end{definition}

For an online algorithm to be feasible, it must be online feasible for all offline feasible instances. However, In general, there does not exist an online algorithm that is feasible on all offline feasible instances. In this paper, we investigate the cases in which online feasibility can be guaranteed with additional assumptions.

\subsection{Smoothed Least-Laxity-First Algorithm}
\label{SS:Smoothed Least-Laxity-First Algorithm}

In this section, we introduce our proposed online algorithm, the smoothed Least-Laxity-First (sLLF), which is an improvement from the classic least-laxity-first (LLF) algorithm~\cite{Mok1983FundamentalEnvironment}. We can see laxity as a measure for the flexibility (or urgency) in the charging process of an EV.

\begin{definition}[laxity]
	\label{definition:laxity}
	   The laxity of an EV $i \in \hV$ at time $t\in\hT$ is defined as the remaining time of the vehicle in the charging station minus the minimum remaining time needed to be fully charged\footnote{ For convenience, laxity is defined on the whole temporal domain $\hT$.} \ie:

	\begin{equation*}
		\ell_i(t) = 
		\begin{cases}
			[d_i - t ]^+ - \frac{ e_i(t)}{\br_i}, & t \geq a_i \\ 
			+ \infty, & t< a_i
		\end{cases}
	\end{equation*}	   

	\noindent
	where $[\cdot]^+$ denotes as the projection onto the set $\mathcal{R}_+$ of non-negative real numbers. 
\end{definition}

\noindent
Notice that for $t<d_i$,

\begin{align*}
	\ell_i(t+1) & = [d_i - (t+1) ]^+ - \frac{ e_i(t+1)}{\br_i} \\ 
	& = \ell_i(t)-1-\frac{r_i(t)}{\br_i}
\end{align*}	

\noindent
From \cref{definition:feasibility of an algorithm}, we can understand that for an algorithm to be feasible, it needs to satisfy \cref{eq:rate constraint,eq:power constraint,eq:energy demand constraint}. \cref{eq:rate constraint} can be rewrite as $0 \leq \frac{r_i(t)}{\br_i} \leq 1$ which suggests that laxity of EV $i$ is monotonically decreasing at $a_i \leq t<d_i$. Then, \cref{eq:energy demand constraint} implies that, for $t \geq d_i$,  $e_i(t)=0,\ i \in \hV$ which means that for all $t \geq d_i$ 

\begin{align*}
		\ell_i(t ) & = \left[d_i - t) \right]^+ - \frac{ e_i(t )}{\br_i} \\
		& = 0
\end{align*}  

\noindent
Therefore, these feasibility condition implies the following proposition:

\begin{proposition}
	\label{proposition:feasibility condition}
	The algorithm $\hA$ is feasible on an instance $\hI$ if and only if $\hA$ gives charging rates that result in non-negative laxity for all EVs, \ie: 
	
	\begin{equation*}
		\ell_i (t) \geq 0, \; \forall i \in \hV, \; t \in \hT  
	\end{equation*}

\end{proposition}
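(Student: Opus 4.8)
The plan is to prove the two implications separately, directly from \cref{definition:feasibility of an algorithm} and \cref{definition:laxity}, using the recursion $\ell_i(t+1) = \ell_i(t) - 1 - r_i(t)/\br_i$ together with the two facts already recorded just before the statement: constraint~\cref{eq:rate constraint} makes $\ell_i$ non-increasing on $[a_i,d_i)$, and constraints~\cref{eq:rate constraint,eq:energy demand constraint} together force $e_i(t)=0$, hence $\ell_i(t)=0$, for every $t \ge d_i$.

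For the ``only if'' direction, I would assume $\hA$ is feasible on $\hI$ (so its rates satisfy \cref{eq:rate constraint,eq:power constraint,eq:energy demand constraint}) and fix $i \in \hV$ and $t \in \hT$. If $t < a_i$ then $\ell_i(t) = +\infty \ge 0$; if $t \ge d_i$ then $\ell_i(t) = 0 \ge 0$ by the second fact above. For $a_i \le t < d_i$, I would combine \cref{eq:energy demand constraint} with the support condition in \cref{eq:rate constraint} to write $e_i(t) = \sum_{\tau=t}^{d_i-1} r_i(\tau)$, bound each of the $d_i - t$ summands by $\br_i$, and conclude $e_i(t)/\br_i \le d_i - t = [d_i-t]^+$, i.e. $\ell_i(t) \ge 0$. (Equivalently, monotonicity of $\ell_i$ on $[a_i,d_i)$ together with $\ell_i(d_i-1) = 1 + r_i(d_i-1)/\br_i \ge 1$ yields the same bound.)

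For the ``if'' direction, I would assume the rates produced by $\hA$ satisfy $\ell_i(t) \ge 0$ for all $i,t$; since \cref{eq:rate constraint,eq:power constraint} are per-slot constraints that the algorithm's output respects by construction, only \cref{eq:energy demand constraint} remains to be checked. Evaluating the hypothesis at $t = d_i$ gives $0 = [d_i - d_i]^+ \ge e_i(d_i)/\br_i$, so $e_i(d_i) \le 0$; together with $e_i(d_i) \ge 0$ --- which holds for any charging algorithm that never delivers more than the outstanding demand, so that $e_i(t) \ge 0$ throughout --- this forces $e_i(d_i) = 0$. Since $r_i(\tau) = 0$ for $\tau \notin [a_i,d_i)$, we then get $\sum_{\tau \in \hT} r_i(\tau) = \sum_{\tau=0}^{d_i-1} r_i(\tau) = e_i - e_i(d_i) = e_i$, which is \cref{eq:energy demand constraint}, so $\hA$ is feasible on $\hI$.

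The argument is short, and the one place needing care is exactly the use of $e_i(t) \ge 0$ in the ``if'' direction: without the (implicit) assumption that the algorithm is non-wasteful, the laxity condition at the deadline only yields $\sum_\tau r_i(\tau) \ge e_i$ rather than equality. The remaining loose ends --- degenerate instances such as $a_i = d_i$ or $d_i = T$, and making the telescoping of the partial sums precise at the boundary of $\hT$ --- are routine bookkeeping.
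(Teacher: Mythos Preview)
Your argument is correct and follows the same route as the paper: the paper's justification is exactly the short discussion preceding the proposition (monotonicity of $\ell_i$ on $[a_i,d_i)$ from \cref{eq:rate constraint}, and $\ell_i(t)=0$ for $t\ge d_i$ from \cref{eq:energy demand constraint}), which is precisely your ``only if'' direction. Your treatment of the ``if'' direction---evaluating $\ell_i(d_i)\ge 0$ to force $e_i(d_i)=0$ and flagging the implicit non-overcharging assumption $e_i(t)\ge 0$---is more careful than the paper, which leaves that implication unargued; one minor slip is the parenthetical $\ell_i(d_i-1)=1+r_i(d_i-1)/\br_i$, which inherits the sign typo in the paper's recursion (the correct relation is $\ell_i(t+1)=\ell_i(t)-1+r_i(t)/\br_i$, giving $\ell_i(d_i-1)=1-r_i(d_i-1)/\br_i\in[0,1]$), but your main bound $e_i(t)\le \br_i(d_i-t)$ does not rely on it.
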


\cref{proposition:feasibility condition} suggests that the smallest laxity among all EVs can serve as a measure of the distance from infeasibility. A naive approach---referred to as the least laxity first (LLF) algorithm~\cite{Mok1983FundamentalEnvironment}---is to charge EVs starting from those with the least laxity. However, the LLF algorithm may compromise the feasibility of certain offline feasible instances (see \cref{S:Performance Evaluation and Comparison}). The LLF algorithm also cause excessive preemption and oscillations in the charging rate\footnote{\label{FN:LLF_ossi}For example, consider a system of two EVs, where $\ell_1(0) = 1.25,\ \ell_2(0) = 0.75$ and $\br_1 = \br_2 = P(t) =1, t \in \hT$. EV $1$ and EV $2$ will be charged according to $(\rr_1(0), \rr_2(0)) = ( 0, 1)$, $(\ell_1(1), \ell_2(1)) = ( 0.25 ,0.75)$; $(\rr_1(1), \rr_2(1)) = (1, 0 )$, and so on. In this example, both EV switches in-between charging and not charging.}, which may reduce the lifetime of certain batteries (\eg Li-ion)~\cite{7809844}. To eliminate these drawbacks, we propose an alternative approach by maximizing the minimum laxity among all EVs with the charging rate at time $T$, $r(T)=\{r_i(T): \ \forall i \in \hV\}$, as the design variable in order to maximize the feasibility margin, $\max_{r(T)} \min_{ i \in \hV } \ell_i(T)$. 

However, because the solution to the above maximization problem may not be unique, we considered the following problem to produce a unique solution: 

\begin{equation}
	\label{eq:offline_laxity}
	\begin{aligned}
		r(T) =\  & \argmax_{r(T)} & & \sum_{i  \in \hV}  \br_i f( \ell_i(T) ) \\
		& \text{such that} & & \eqref{eq:rate constraint}, \  \eqref{eq:power constraint}, \text{ and } \eqref{eq:energy demand constraint}
	\end{aligned}
\end{equation}

\noindent
where $f$ is twice continuously differentiable, strictly concave, and monotonically increasing. Here, if an instance $\hI$ is offline feasible, then there exists certain charging rates that achieve $\ell_i(T) = 0, \forall i \in \hV$, which yields $\sum_{i  \in \hV}  f( \ell_i(T) ) = \sum_{i  \in \hV}  f(0) $. Since the laxity is monotonically decreasing at any $t \in \hT$, such charging rates also satisfy \cref{proposition:feasibility condition} which implies that \cref{eq:offline_laxity} is feasible on instance $\hI$, \ie:

\begin{corollary}
	\label{corollary:feasibility of optimization problem} 
	\cref{eq:offline_laxity} is feasible for any offline feasible instance.
\end{corollary}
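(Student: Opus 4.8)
The plan is to reduce the statement to \cref{proposition:feasibility condition} and \cref{definition:feasible instance} by unwinding \cref{definition:feasibility of an algorithm}; no estimates are needed. Fix an offline feasible instance $\hI$. By \cref{definition:feasible instance} there exists a schedule $r$ satisfying \eqref{eq:rate constraint}, \eqref{eq:power constraint}, and \eqref{eq:energy demand constraint}; equivalently, by the ``if'' direction of \cref{proposition:feasibility condition}, $\ell_i(t)\ge 0$ for every $i\in\hV$ and $t\in\hT$. The first step is to record that any such $r$ is a feasible point of the program \eqref{eq:offline_laxity}, whose constraints are exactly \eqref{eq:rate constraint}, \eqref{eq:power constraint}, \eqref{eq:energy demand constraint}; in particular the feasible set of \eqref{eq:offline_laxity} is non-empty. (Using that \eqref{eq:energy demand constraint} together with \eqref{eq:rate constraint} forces $e_i(d_i)=0$, hence $\ell_i(d_i)=0$, and that laxity is non-increasing on $[a_i,d_i)$, one also sees $\ell_i(T)=0$ along this $r$, so its objective equals $\sum_{i\in\hV}\br_i f(0)$; this observation is the one the surrounding text uses, but it is not actually needed for feasibility.)

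Next I would note that the maximizer in \eqref{eq:offline_laxity} is well defined: by \eqref{eq:rate constraint} the decision variables are bounded and all three constraints are affine, so the feasible set is a non-empty compact polytope, and since $f$ is continuous the objective $\sum_{i\in\hV}\br_i f(\ell_i(T))$ is continuous, hence an optimizer $r^\star$ exists (strict concavity of $f$ gives uniqueness, but that is not required here). Because $r^\star$ lies in the feasible set of \eqref{eq:offline_laxity}, it satisfies \eqref{eq:rate constraint}, \eqref{eq:power constraint}, and \eqref{eq:energy demand constraint} by construction. Applying the ``only if'' direction of \cref{proposition:feasibility condition} to $r^\star$ then yields $\ell_i(t)\ge 0$ for all $i\in\hV$, $t\in\hT$, which by \cref{definition:feasibility of an algorithm} is precisely the assertion that \eqref{eq:offline_laxity}, viewed as an algorithm, is feasible on $\hI$. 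As $\hI$ was an arbitrary offline feasible instance, the corollary follows.

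The only point that needs genuine care is pinning down the set-up of \eqref{eq:offline_laxity}: the program should be understood over a full schedule $r=\{r_i(t)\}_{i\in\hV,t\in\hT}$ (labelled by its terminal component $r(T)$, the control applied in the online execution) with feasible set equal to $\{r:\eqref{eq:rate constraint},\ \eqref{eq:power constraint},\ \eqref{eq:energy demand constraint}\}$. Once that reading is fixed, the argument is purely definitional — it uses neither the particular form of $f$ nor any property beyond continuity — and amounts to the equivalence ``$\hI$ offline feasible $\Leftrightarrow$ this constraint set is non-empty'' combined with the laxity characterization of feasibility in \cref{proposition:feasibility condition}. I would keep the written proof to roughly the two short paragraphs above.
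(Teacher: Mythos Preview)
Your proposal is correct and follows the same line as the paper's short argument preceding the corollary: offline feasibility of $\hI$ means the constraint set of \eqref{eq:offline_laxity} is exactly the (non-empty) set of schedules satisfying \eqref{eq:rate constraint}--\eqref{eq:energy demand constraint}, so an optimizer exists and, by construction, is feasible in the sense of \cref{definition:feasibility of an algorithm}. One cosmetic point: once you have that $r^\star$ satisfies \eqref{eq:rate constraint}--\eqref{eq:energy demand constraint}, feasibility is immediate from \cref{definition:feasibility of an algorithm}, so the detour through \cref{proposition:feasibility condition} in your second paragraph is unnecessary (and, if kept, the final step should invoke the ``if'' direction of \cref{proposition:feasibility condition} rather than \cref{definition:feasibility of an algorithm}).
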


To obtain the solution of \cref{eq:offline_laxity} without information of incoming EVs, we approximate \eqref{eq:offline_laxity} with the following online algorithm: at each time $t \in \hT$, given $\ell_i(t), i \in \hV$, compute\footnote{For more complex form of power limits, in \cref{eq:offline_laxity,eq:online_laxity}, the power constraints in \cref{eq:power constraint} can be replaced by $A r(t) \leq_{e.w.} P(t)$, for element-wise inequality and positive matrix $A$. \cref{corollary:feasibility of optimization problem} also holds for $A r(t) \leq_{e.w.} P(t)$.}

\begin{equation}
	\label{eq:online_laxity}
	\begin{aligned}
		& \max_{r(t)} & & \sum_{i  \in \hV}  \br_i f( \ell_i(t+1) ) \\
		& \text{such that} & & \eqref{eq:rate constraint} \text{ and } \eqref{eq:power constraint} \\
		& & & r_i (t)  \leq e_i (t),  \ i \in \hV_t
	\end{aligned}
\end{equation}

\noindent
\cref{eq:online_laxity} also maximizes the minimum laxity at time $t+1$, $\min_{ i \in \hV_{t} } \ell_i(t+1)$, and thus maximizes the feasibility margin at time $t$\footnote{The solution of \cref{eq:online_laxity} is also unique.}.

To solve \cref{eq:online_laxity}, we first need to look at the Karush-Kuhn-Tucker (KKT) conditions of the problem:

\begin{align}
	\label{eq:kkt1}
	& r_i (t) \geq 0 &\ i \in \hV_t \\
	\label{eq:kkt2}
	& r_i (t)  \leq \min( e_i (t),  \br_i ) &\ i \in \hV_t \\
	\label{eq:kkt3}
	& \sum_{i \in \hV_t} r_i (t) \leq P(t) &\ i \in \hV_t \\
	\label{eq:kkt4}
	&f'( \ell_i(t+1) ) + \bar \lambda_i - \underline \lambda_i + v = 0 & \ i \in \hV_t \\
	\label{eq:kkt5}
	&\underline \lambda_i \geq 0, \;\;\bar \lambda_i\geq 0& \ i \in \hV_t\\
	\label{eq:kkt6}
	&\underline \lambda_i r_i (t)  = 0 , \;\; \bar \lambda_i \{ r_i (t) - \min( e_i (t),  \br_i ) \}= 0 & \ i \in \hV_t
\end{align}

\noindent
where $\underline \lambda_i,\ \bar \lambda_i, \text{ and } v$ are the dual variables for constraints \eqref{eq:kkt1}, \eqref{eq:kkt2}, and \eqref{eq:kkt3} respectively. Consider three mutually exclusive cases: 

\begin{itemize}

	\item $r_i (t) = 0$ that leads to $\bar \lambda_i = 0$ and

		\begin{equation}
			\label{eq:kkt_c1}
			\begin{aligned}
				\frac{r_i(t)}{\br_i}  & = f'^{-1} ( - v ) -  \ell_i(t) + 1 - \underline \lambda_i \\
				& \leq f'^{-1} ( - v ) -  \ell_i(t) + 1
			\end{aligned}
		\end{equation}
		
	\item $r_i (t) \in \{ 0 ,\min( e_i (t),  \br_i ) \}$ which implies $\bar \lambda_i = \underline \lambda_i = 0$; obtained from \eqref{eq:kkt6} (complementary slackness). Substituting $\bar \lambda_i = \underline \lambda_i = 0$ into \eqref{eq:kkt4}, we obtain 

		\begin{equation}
			\label{eq:kkt_c2}
			\begin{aligned}
				\frac{r_i(t)}{\br_i} & = f'^{-1} ( - v ) - \ell_i(t) + 1 \ \ 
			\end{aligned}
		\end{equation}

	\item $r_i (t) = \min( e_i (t),  \br_i ) $ that leads to $\underline \lambda_i = 0$ and
		
		\begin{equation}
			\label{eq:kkt_c3}
			\begin{aligned}
				\frac{r_i(t)}{\br_i}  & = f'^{-1} ( - v ) -  \ell_i(t) + 1 + \hat \lambda_i \\
				& \geq f'^{-1} ( - v ) -  \ell_i(t) + 1
			\end{aligned}
		\end{equation}

\end{itemize}

\noindent
Here, the inverse of $f'$ exists because $f'$ is strictly concave, strictly increasing, and twice continuously differentiable. 

To simplify the notation, define an variable $L(t) = f'^{-1} ( - v )$. Then the following can be obtained:

\begin{proposition}
	
	\label{proposition:valley_filling}
	With $f$ strictly concave, strictly increasing, and twice continuously differentiable, a solution to the optimization problem \eqref{eq:online_laxity} can be obtained by combining \eqref{eq:kkt_c1}, \eqref{eq:kkt_c2}, and \eqref{eq:kkt_c3}:
	
	\begin{equation}
		\label{eq:threshold}
		r_i^*(t)  = [ \br_i (L(t) -  \ell_i(t) + 1 ) ]_0^{\min ( \br_i, e_i(t) ) }, \ \ i \in \hV_t
	\end{equation}
	
	\noindent
	where $[x]^b_a$ denotes the projection of the scalar $x$ on interval $[a,b]$ and $r_i^*(t)$ is  the resulted $r_i(t)$. The solution is then attained at the boundary
	
	\begin{equation}
		\label{eq:threshold2}
		\begin{aligned}
			\sum_{ i \in \hV_t} r^*_i (t) & = \sum_{ i \in \hV_t} [ \br_i ( L(t) - \ell_i(t) + 1 )   ]^{\min ( \br_i , e_i(t) )}_0 \\
			& = \min \left( P(t) , \sum_{i \in \hV_t} \min ( \br_i , e_i(t) ) \right) 
		\end{aligned}
	\end{equation}
	
\end{proposition}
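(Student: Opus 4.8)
The plan is to treat the online program \eqref{eq:online_laxity} as a strictly concave maximization over a compact polytope, so that the KKT system \eqref{eq:kkt1}--\eqref{eq:kkt6} derived above is both necessary and sufficient for optimality, and then to read the optimizer off that system. First I would note the convexity: each laxity $\ell_i(t+1)$ is an affine, non-constant function of $r_i(t)$ and of no other coordinate, and $f$ is strictly concave, so $\sum_{i\in\hV_t}\br_i f(\ell_i(t+1))$ is strictly concave in $r(t)$; the feasible set cut out by \eqref{eq:rate constraint}, \eqref{eq:power constraint} and $r_i(t)\le e_i(t)$ is a non-empty compact polytope because $r(t)=0$ lies in it. Hence the optimizer $r^*(t)$ exists and is unique, and since all constraints are affine the conditions \eqref{eq:kkt1}--\eqref{eq:kkt6}, with multipliers $\underline\lambda_i,\bar\lambda_i$ for the box on $r_i(t)$ and $v$ for the power bound, hold at $r^*(t)$ and characterize it.

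Next I would introduce the common water level $L(t):=f'^{-1}(-v)$, which is well posed because $f\in C^2$ strictly concave makes $f'$ continuous and strictly monotone, hence invertible on its range; when $-v$ escapes that range one admits the limiting values $L(t)=\pm\infty$, which simply collapses \eqref{eq:threshold} onto a box endpoint. Rearranging the stationarity identity \eqref{eq:kkt4} for $\ell_i(t+1)$ and using the affine relation between $\ell_i(t+1)$ and $r_i(t)$ recorded after \cref{definition:laxity} expresses $r_i(t)/\br_i$ as $L(t)-\ell_i(t)+1$ shifted by the box multipliers. Complementary slackness \eqref{eq:kkt6} then places every $i$ in exactly one of the exhaustive cases $r_i^*(t)=0$, $0<r_i^*(t)<\min(\br_i,e_i(t))$, $r_i^*(t)=\min(\br_i,e_i(t))$, which are precisely \eqref{eq:kkt_c1}, \eqref{eq:kkt_c2}, \eqref{eq:kkt_c3}: in the middle case $r_i^*(t)/\br_i=L(t)-\ell_i(t)+1$ with equality, and in the two extreme cases the sign of the surviving multiplier forces $L(t)-\ell_i(t)+1\le 0$, respectively $\br_i(L(t)-\ell_i(t)+1)\ge\min(\br_i,e_i(t))$. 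Collecting the three cases is exactly the statement that $r_i^*(t)$ equals $\br_i(L(t)-\ell_i(t)+1)$ projected onto $[0,\min(\br_i,e_i(t))]$, which is \eqref{eq:threshold}.

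For \eqref{eq:threshold2} I would exploit one-dimensional monotonicity in the scalar $L(t)$: with the $\ell_i(t)$ fixed, the clamped map $L\mapsto[\br_i(L-\ell_i(t)+1)]_0^{\min(\br_i,e_i(t))}$ is continuous and non-decreasing, equal to $0$ for $L$ sufficiently small and to $\min(\br_i,e_i(t))$ for $L$ sufficiently large, so $\sum_{i\in\hV_t}r_i^*(t)$ is a continuous non-decreasing function of $L(t)$ whose range is the whole interval $\big[0,\sum_{i}\min(\br_i,e_i(t))\big]$. On the other hand, $f$ is strictly increasing and $\ell_i(t+1)$ increases in $r_i(t)$, so raising any $r_i(t)$ strictly increases the objective; thus at the optimum either every upper box constraint is tight, giving $\sum_i r_i^*(t)=\sum_i\min(\br_i,e_i(t))$, or the power constraint \eqref{eq:power constraint} is tight, giving $\sum_i r_i^*(t)=P(t)$; since always $\sum_i r_i^*(t)\le\sum_i\min(\br_i,e_i(t))$ and $\sum_i r_i^*(t)\le P(t)$, in either situation $\sum_{i\in\hV_t}r_i^*(t)=\min\big(P(t),\sum_i\min(\br_i,e_i(t))\big)$, and this equality pins down $L(t)$ (uniquely in the power-scarce regime).

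I expect the only delicate points to be the degenerate coordinates and the edge behavior of $L(t)$. An EV with $e_i(t)=0$ has the single-point box $[0,0]$, on which both $\underline\lambda_i$ and $\bar\lambda_i$ may be positive, so the three-case split has to be phrased so that such $i$ trivially obey \eqref{eq:threshold} with $r_i^*(t)=0$; and $L(t)=f'^{-1}(-v)$ must be interpreted as $\pm\infty$ when $-v$ falls outside the image of $f'$ (equivalently, one assumes $f'$ is onto $\mathbb{R}$). Once these are dealt with, the rest is the routine case-matching of the second paragraph and the monotonicity argument of the third.
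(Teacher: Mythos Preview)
Your proposal is correct and follows essentially the same route as the paper: the derivation there is the KKT system \eqref{eq:kkt1}--\eqref{eq:kkt6} together with the three-case split \eqref{eq:kkt_c1}--\eqref{eq:kkt_c3} that immediately precedes the proposition, after which \eqref{eq:threshold} is stated as their combination and \eqref{eq:threshold2} is asserted without further argument. Your write-up is in fact more complete than the paper's, since you explicitly justify sufficiency of KKT via strict concavity over a polytope, supply the monotonicity-in-$f$ argument for why the power or box constraint must bind at optimum (which the paper omits), and flag the degenerate $e_i(t)=0$ and range-of-$f'$ edge cases.
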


For EV $i \in \hV_t$ with $\br_i \leq e_i(t)$, we obtain charging rate from \cref{eq:threshold} that makes $\ell_i(t+1) = [ L(t) ]_{\ell_i(t) - 1}^{\ell_i(t)}$. Hence, $L(t)$ can be considered as a threshold of $\ell_i(t+1)$. Since $r^*_i(t)$ in \cref{eq:threshold} is an increasing function of $L(t)$, a binary search can be employed to find the threshold $L(t)$ in \cref{eq:threshold2}. Given $L(t)$, the charging rates $r^*_i(t), i \in \hV_t$ is then determined using \cref{eq:threshold}. This procedure is a scalable algorithm that we formally state in \cref{alg:proposed}, and name it as the \textit{smoothed least-laxity-first (sLLF)} algorithm. 

\begin{algorithm}
	\caption{smoothed least-laxity-first (sLLF) }
	\label{alg:proposed}
	\begin{algorithmic}[htbp]
		\FOR{$t \in \hT$}
			\STATE update set of EVs, $\hV_t$, and laxity, $\ell_i (t) \text{ for } i \in \hV_t$
			\STATE obtain $L(t)$ that solves \cref{eq:threshold2} using bisection
			\STATE charge according to rates $r^*_i(t)$ in \cref{eq:threshold}
		\ENDFOR
	\end{algorithmic}
\end{algorithm}

We found that the computational complexity of this sLLF algorithm at each time $t$ is $O(|\hV_t|+\log(1/\delta))$, where $\delta$ is the level of tolerable error. We need $O(|\hV_t|)$ operations to update the laxity of vehicles, and $O(\log(1/\delta)$ operations for binary search for $L(t)$. We also note that the sLLF algorithm possesses the following properties that will be useful for analyzing the feasibility condition:

\begin{enumerate}

	\item Persistence

	\begin{lemma}
		\label{lemma:flip_conditions} 
		Under the sLLF algorithm, if there exist two EVs $i, j \in \hV$ such that 
	
		\begin{equation}
			\label{eq:mn_condition}
			\begin{aligned}
				\ell_i(t) & \leq \ell_j(t),  \\
				\ell_i(t+1) & > \ell_j(t+1)
			\end{aligned}
		\end{equation}
		
		\noindent
		then either one of the following holds: 
	
		\begin{equation}
		\label{eq:flip_cases}
			\begin{cases}
				 t \geq d_i \ \& \  r_i(t) = 0,\\
				 t < d_i\ \& \  t < d_j \ \& \ e_j(t+1) = 0\ \&\ r_i(t)  \neq 0
			\end{cases}
		\end{equation}
	
		\begin{proof}[\textbf{Proof}]
	
			\noindent
			\cref{definition:laxity} satisfies the following relation:
	
			\begin{equation}
				\label{eq:z_bound}
				\begin{aligned}
					\ell_i(t) - 1 & \leq \ell_i(t+1), & i\in\hV \\
					& \leq \ell_i(t)
				\end{aligned}
			\end{equation}
			
			\begin{itemize}
				
				\item In the case  $r_i(t) = 0$:
		 
				\begin{equation}
					\label{eq:z_updates}
					\ell_i(t+1) = 
					\begin{cases} 
						\ell_i(t) -1, & t < d_i \\ 
						\ell_i(t), & t \geq d_i
					\end{cases}
				\end{equation}
				
				\noindent
				Suppose that $t < d_i$, combining the first condition in \eqref{eq:mn_condition} and \eqref{eq:z_bound} gives
		
				\begin{align*}
					\ell_i(t+1) & = \ell_i(t) -1 \\
					& \leq \ell_j(t) -1 \\
					& \leq \ell_j (t+1) \\
					\intertext{thus}
					\ell_i(t+1) & \leq \ell_j (t+1) 
				\end{align*} 
				
				\noindent
				which contradicts the second condition in \eqref{eq:mn_condition}. Therefore, $t \geq d_i$ and it follows the first case in \eqref{eq:flip_cases}. 
				
				\item In the case $r_i(t) \neq 0$, it implies $t < d_i$ then:
				
				\begin{itemize}
		
					\item if $ t<d_j$, \eqref{eq:mn_condition} jointly implies
		
					\begin{equation}
						\label{eq:two_gamma_cond}
						\frac{ r_j(t) }{ \br_j(t) } <  \frac{ r_i(t) }{ \br_i(t) }
					\end{equation}
		
					\noindent
					Under the sLLF algorithm, \eqref{eq:two_gamma_cond} happens only when $e_j(t) =  r_j(t)$, which leads to $e_j(t+1) = 0$. 
		
					\item if $ t \geq d_j$, then 
					
					\begin{align*}
						\ell_j(t+1) & = \ell_j(t) \\
						& \geq \ell_i(t) \\
						& \geq \ell_i (t+1) \\
						\intertext{thus}
					    \ell_i(t+1) & \leq \ell_j (t+1) 
					\end{align*}
					
					\noindent
					which contradicts the second condition in \eqref{eq:mn_condition}. Therefore, it follows the second case in \eqref{eq:flip_cases}.
	
				\end{itemize}
	
			\end{itemize}
	
		\end{proof}
	
	\end{lemma}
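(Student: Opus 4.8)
The plan is to argue by contradiction, splitting on the value of $r_i(t)$ and on whether EV $j$ has departed, using two facts that follow immediately from \cref{definition:laxity}: the one-step recursion $\ell_k(t+1)=\ell_k(t)-1+r_k(t)/\br_k$ whenever $a_k\le t<d_k$, and $\ell_k(t+1)=\ell_k(t)$ once $t\ge d_k$ (then $r_k(t)=0$ and the deadline term is already zero); together these yield the sandwich $\ell_k(t)-1\le\ell_k(t+1)\le\ell_k(t)$ of \eqref{eq:z_bound}. The second fact is the closed form \eqref{eq:threshold} from \cref{proposition:valley_filling}, which I would repackage for an active EV as $\ell_k(t+1)=[L(t)]_{\ell_k(t)-1}^{\ell_k(t)-1+u_k}$ with $u_k:=\min(1,e_k(t)/\br_k)$: the next laxity of $k\in\hV_t$ is the threshold $L(t)$ clamped into a window whose left end is the ``do not charge'' value and whose right end is the ``charge at the maximal admissible rate $\min(\br_k,e_k(t))$'' value.

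First I would rule out $r_i(t)=0$ together with $t<d_i$: there the recursion gives $\ell_i(t+1)=\ell_i(t)-1\le\ell_j(t)-1\le\ell_j(t+1)$ by $\ell_i(t)\le\ell_j(t)$ and the left half of \eqref{eq:z_bound} for $j$, contradicting $\ell_i(t+1)>\ell_j(t+1)$; hence $r_i(t)=0$ forces $t\ge d_i$, which is the first branch of \eqref{eq:flip_cases}. So from now on $r_i(t)\neq0$, and by \cref{eq:rate constraint} this forces $a_i\le t<d_i$. If in addition $t\ge d_j$, then $\ell_j(t+1)=\ell_j(t)\ge\ell_i(t)\ge\ell_i(t+1)$ (right half of \eqref{eq:z_bound} for $i$), again a contradiction; so $t<d_j$ as well, i.e.\ $i,j\in\hV_t$, and it remains only to prove $e_j(t+1)=0$, equivalently $r_j(t)=e_j(t)$ since \eqref{eq:online_laxity} already imposes $r_j(t)\le e_j(t)$.

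Subtracting the two recursions, $0<\ell_i(t+1)-\ell_j(t+1)=(\ell_i(t)-\ell_j(t))+(r_i(t)/\br_i-r_j(t)/\br_j)$ with $\ell_i(t)-\ell_j(t)\le0$, hence $r_i(t)/\br_i>r_j(t)/\br_j$, which is \eqref{eq:two_gamma_cond}. Now I plug in the clamped form $\ell_j(t+1)=[L(t)]_{\ell_j(t)-1}^{\ell_j(t)-1+u_j}$, which takes one of three values. If it is the left end $\ell_j(t)-1$, then $r_j(t)=0$ and $L(t)\le\ell_j(t)-1$, so $\ell_j(t+1)\ge L(t)$; if it is the interior value $L(t)$, then $\ell_j(t+1)\ge L(t)$ trivially; in either case $\ell_i(t+1)>\ell_j(t+1)\ge L(t)$, but the clamped form for $i$ shows $\ell_i(t+1)>L(t)$ forces $L(t)<\ell_i(t)-1$, hence $L(t)-\ell_i(t)+1<0$ and $r_i(t)=0$ by \eqref{eq:threshold} --- contradicting $r_i(t)\neq0$. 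Therefore $\ell_j(t+1)$ equals the right end $\ell_j(t)-1+u_j$, i.e.\ $r_j(t)=\min(\br_j,e_j(t))$; this minimum cannot be $\br_j$ (else $r_j(t)/\br_j=1\ge r_i(t)/\br_i$ contradicts \eqref{eq:two_gamma_cond}), so $r_j(t)=e_j(t)$ and $e_j(t+1)=0$, which together with $t<d_i$, $t<d_j$, $r_i(t)\neq0$ is the second branch of \eqref{eq:flip_cases}.

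The arithmetic around the sandwich is routine; the step that needs care is discarding the ``left end'' and ``interior'' alternatives for $\ell_j(t+1)$, for which I would isolate the sub-observation that the map $x\mapsto[L(t)]_{x-1}^{x-1+u}$ is nondecreasing and that its value exceeds $L(t)$ \emph{only} when $x>L(t)+1$; thus any flip that pushes $\ell_i(t+1)$ above $L(t)$ also pushes $\ell_i(t)$ above $L(t)+1$, forcing $r_i(t)=0$ and contradicting the standing assumption. Two smaller points to watch: the recursion carries a ``$+$'' (charging at the full rate leaves laxity unchanged, not charging drops it by one), which is what makes \eqref{eq:z_bound} and \eqref{eq:two_gamma_cond} mutually consistent; and the argument tacitly uses that both EVs have arrived by time $t$ (otherwise $\ell_j(t)=+\infty$ makes the first hypothesis vacuous and the left half of \eqref{eq:z_bound} fails at the arrival step), which is the regime in which the lemma is invoked.
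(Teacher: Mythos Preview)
Your proof is correct and follows essentially the same case split as the paper: split on $r_i(t)=0$ versus $r_i(t)\neq0$, then on whether $j$ has departed, and in the remaining active-active case derive $r_i(t)/\br_i>r_j(t)/\br_j$ from the recursion. The only difference is that the paper asserts in one line that, under sLLF, the inequality \eqref{eq:two_gamma_cond} forces $r_j(t)=e_j(t)$, whereas you actually prove this step by rewriting \eqref{eq:threshold} as $\ell_k(t+1)=[L(t)]_{\ell_k(t)-1}^{\ell_k(t)-1+u_k}$ and observing that $\ell_i(t+1)>L(t)$ would force $r_i(t)=0$; your argument here is a welcome expansion of what the paper leaves implicit.
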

	
	\item Fairness

	\noindent
	From Lemma \ref{lemma:flip_conditions}, the solution of the optimization problem \eqref{eq:online_laxity} does not depend of the specific choice of the value function $f$ as long as $f$ is concave, strictly increasing, and has a derivative whose inverse function is well-defined. Without loss of generality, we consider $f( x ) = \log ( x)$. Since non-negative weighted sum and composition with an affine mapping preserve concavity, $C( r(t) ) = \sum_{i \in \hV_t} \br_i f( l_i (t) -1 + r_i(t) / \br_i  )$ is a concave function of $r(t) = [r_1(t) ,r_2(t), \cdots , r_{|\hV_t|}(t)]^T$. Let $\hat r(t) \neq r(t)$ be any rates that satisfy constraints in \cref{eq:rate constraint,eq:power constraint}, where $\hat \ell_i(t) , \ i \in \hV_i$ be the resultant laxity, then, from the first-order-condition of concave functions, $C( \hat r(t) )  - C( r(t) ) + ( r(t)  - \hat r(t) )^T \nabla C( r(t) )  \geq 0 $. Since $r(t)$ is the optimal solution, then
	
	\begin{align*}
		0 & \leq C( r(t) ) -C( \hat r(t) )  \\
		& \leq ( r(t)  - \hat r(t) )^T \nabla C( r(t) ) \\
		& \leq \sum_{ i \in \hV_t } \br_i \frac{ \ell_i(t)  - \hat \ell_i (t) } {  \ell_i(t+1)}
	\end{align*}
	
	\noindent
	where the derivative above is taken over $r(t)$ for $f( \ell_i(t+1) ) = \log ( \ell_i(t+1) )$. On the other hand, if $\hat \ell_i(t+1) > \ell_i(t+1)$ for some EV $i \in \hV_t$, then $r_i(t) < \hat r_i(t) \leq \br_i$. This can only happen when $\ell_i(t+1) \geq  L(t)$ or $\ell_i(t+1) >  L(t)$. As EV $i$ in $\hat r_i(t)$ receives more energy than that in $r_i(t)$, there exists an EV $j$ that receives less energy in $\hat r_j(t)$. Any EV $j$ that receives non-zero energy satisfies
	
	\begin{align*}
		\hat \ell_j(t+1) & \leq L(t+1) = \ell_i(t+1) \\
		\hat \ell_j(t+1) & < \ell_j(t+1)
	\end{align*}
	
	\noindent
	These can be summarize as the following corollary
	
	\begin{corollary}
		\label{corollary:fairness}
		Given the past charging rate $r^{t-1}$, the sLLF algorithm finds a current charging rate $r(t)$ that is both proportionally fair and max-min fair to one-step-ahead laxity. In other words, let $\ell_i(t+1)$ be the one-step-ahead laxity under the sLLF algorithm and $\hat \ell_i(t+1)$ be another laxity produced by a charging rate satisfying the constraints in \cref{eq:online_laxity}, the following two conditions hold:

		\begin{itemize}
	
			\item weighted proportional fairness:
	
			\begin{equation}
				\label{eq:rate constraintop_fair}
				\sum_{ i \in \hV_t } \br_i \frac{\hat \ell_i(t+1) - \ell_i(t+1)  } { \ell_i(t+1) }  \leq 0 
			\end{equation}
			
			\item max-min fairness:  if $\hat \ell_i(t+1) > \ell_i(t+1)$ for some EV $i \in V_t$, then there exits EV $j \in V_t$ such that
	 
			\begin{equation}
				\label{eq:minmax_fair}
				\begin{aligned}
					\hat \ell_j(t+1) &\leq \ell_i(t+1) \\
					\hat \ell_j(t+1) & < \ell_j(t+1)
				\end{aligned}
			\end{equation}
	
		\end{itemize}
	
	\end{corollary}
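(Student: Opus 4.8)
The plan is to treat the one‑step optimization \eqref{eq:online_laxity} as a smooth concave maximization and read off both fairness claims from its optimality conditions, specializing the value function to $f(x)=\log x$ only at the end. By \cref{proposition:valley_filling} the sLLF rate $r(t)$ is the threshold (water‑filling) solution \eqref{eq:threshold}--\eqref{eq:threshold2}, and neither the threshold form nor the level $L(t)$ depends on $f$; hence, on the region where the attained one‑step‑ahead laxities are positive, I may fix $f=\log$. For that choice $C(r(t)):=\sum_{i\in\hV_t}\br_i\log\big(\ell_i(t)-1+r_i(t)/\br_i\big)=\sum_{i\in\hV_t}\br_i\log\ell_i(t+1)$ is concave in $r(t)$, being a nonnegative‑weighted sum of concave logarithms composed with affine maps; the constraint set $\Omega_t$ cut out by \eqref{eq:rate constraint}, \eqref{eq:power constraint} and $r_i(t)\le e_i(t)$ is a polytope; and $r(t)$ is the unique maximizer of $C$ over $\Omega_t$.

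For the weighted proportional‑fairness inequality \eqref{eq:rate constraintop_fair}, I would apply the first‑order optimality condition for maximizing a concave function over a convex set: for every feasible $\hat r(t)\in\Omega_t$, $\nabla C(r(t))^{\top}\big(\hat r(t)-r(t)\big)\le 0$. A one‑line computation gives $\partial C/\partial r_i(t)=1/\ell_i(t+1)$, so $\sum_{i\in\hV_t}\big(\hat r_i(t)-r_i(t)\big)/\ell_i(t+1)\le 0$. Since $\ell_i(t+1)=\ell_i(t)-1+r_i(t)/\br_i$ is affine in $r_i(t)$ with the same history‑determined $\ell_i(t)$ for $r(t)$ and $\hat r(t)$, we have $\hat r_i(t)-r_i(t)=\br_i\big(\hat\ell_i(t+1)-\ell_i(t+1)\big)$, and substituting reproduces \eqref{eq:rate constraintop_fair} verbatim.

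For the max‑min inequality \eqref{eq:minmax_fair} I would argue directly from the threshold structure. Suppose $\hat\ell_k(t+1)>\ell_k(t+1)$, i.e.\ $\hat r_k(t)>r_k(t)\ge 0$, for some $k\in\hV_t$. Since $r_k(t)<\hat r_k(t)\le\min(\br_k,e_k(t))$, the sLLF rate $r_k(t)$ is not at its upper clip in \eqref{eq:threshold}, which forces $\ell_k(t+1)\ge L(t)$ (it equals $L(t)$ if $r_k(t)$ is interior and equals $\ell_k(t)-1\ge L(t)$ if $r_k(t)$ is clipped at $0$). Next, \eqref{eq:threshold2} must be attained with $\sum_{i\in\hV_t}r_i(t)=P(t)$: otherwise every EV is charged at its cap $\min(\br_i,e_i(t))$, contradicting $\hat r_k(t)>r_k(t)$. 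Hence $\sum_{i\in\hV_t}\hat r_i(t)\le P(t)=\sum_{i\in\hV_t}r_i(t)$, and since the $k$‑th summand of $\sum_{i\in\hV_t}(\hat r_i(t)-r_i(t))\le 0$ is strictly positive, some $j\in\hV_t$ has $\hat r_j(t)<r_j(t)$, so $\hat\ell_j(t+1)<\ell_j(t+1)$. Finally, $r_j(t)>\hat r_j(t)\ge 0$ means $r_j(t)$ is not at its lower clip $0$ in \eqref{eq:threshold}, which forces $\ell_j(t+1)\le L(t)$. Chaining, $\hat\ell_j(t+1)<\ell_j(t+1)\le L(t)\le\ell_k(t+1)$, which is exactly \eqref{eq:minmax_fair}.

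The step I expect to be the main obstacle is making the reduction to $f=\log$ rigorous: $\log$ is defined only on positive arguments, while laxities can a priori be nonpositive, and there $C$ and $\nabla C$ blow up so the first‑order computation is not literally valid. I would handle this either by restricting attention to the regime of strictly positive one‑step‑ahead laxities (the regime implicitly assumed), or by running the first‑order argument with a general admissible $f$ to obtain $\sum_{i\in\hV_t}\br_i f'(\ell_i(t+1))\big(\hat\ell_i(t+1)-\ell_i(t+1)\big)\le 0$ and then passing to the limit along the perturbation $f_\varepsilon(x)=\log(x+\varepsilon)$ as $\varepsilon\downarrow 0$ to recover the $1/\ell_i(t+1)$ weights; the max‑min half needs no such care, since it uses only the $f$‑free description \eqref{eq:threshold}--\eqref{eq:threshold2}. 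Everything else — the three‑way clipping case analysis in \eqref{eq:threshold} and the affine bookkeeping between $r_i(t)$ and $\ell_i(t+1)$ — is routine.
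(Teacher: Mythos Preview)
Your proposal is correct and follows essentially the same approach as the paper: both specialize to $f=\log$, invoke the first-order optimality condition for concave maximization over the feasible polytope to obtain the weighted proportional-fairness inequality, and then read off max--min fairness from the threshold (water-filling) structure \eqref{eq:threshold}--\eqref{eq:threshold2}. Your write-up is in fact slightly more careful than the paper's, in that you explicitly argue why the power constraint must be binding in the max--min step and you flag the domain issue with $\log$ at nonpositive laxities, both of which the paper glosses over.
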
 

\end{enumerate} 

\section{Performance Analysis}
\label{S:Assessment Procedure}

To evaluate our proposed sLLF algorithm, we will compare its performance with several common scheduling algorithms. We will also assess its feasibility condition utilizing the resource augmentation framework. In this section, we present the theoretical background of the resource augmentations (\cref{SS:Augmentation Theory}) as well as the experimental setup (\cref{SS:Experimental Setup}) to evaluate the sLLF algorithm and compare it to several common scheduling algorithms.

\subsection{Resource Augmentation Framework}
\label{SS:Augmentation Theory}

There are two extreme cases in which online algorithms can be feasible for any offline feasible instances: $\br_i \rightarrow \infty$ $\forall i \in \hV$ and $P(t) \rightarrow \infty$. In the first case, $\br_i \rightarrow \infty \text{ } \forall i \in \hV \equiv P(t) \leq min_{i \in \hV_t} \br_i \text{ } \forall t \in \hT$, the charging problem is identical to the single processor preemptive scheduling problem where the processing capacity is time-variant. Here, the earliest-deadline-first (EDF) algorithm is feasible for any offline feasible instances~\cite{Stankovic:1998:DSR:552538}. In the second case, $P(t) \rightarrow \infty \equiv P(t) \geq \sum_{ i \in V_t } \br_i(t) \text{ } \forall t\in\hT$, the sLLF algorithm is feasible for any offline feasible instances. Beyond these two extreme cases, no online algorithm can be feasible on all offline feasible instances~\cite{6670091}. The difficulty in finding feasible online algorithms motivates us to perform a quantitative measurement in evaluating the likelihood of an algorithm become feasible.

From the two cases mentioned above, we can observe that if more resources (\textit{e.g.,} $P(t)$ and $\br_i$) are allowed, an otherwise infeasible problem may become online feasible under an online algorithm. Based on this, we performed a resource augmentation study to characterize the minimum amount of additional resources that will allow an algorithm to produce a feasible solution. Specifically, we analyzed the performance of the sLLF algorithm by adding more (minimum) resources to augment either power supply (power augmentation) or both power supply and peak charging rate (power+rate augmentation). The former augmentation allows more EVs to be charged simultaneously, while the latter allows faster charging. These two augmentation approaches are qualitatively different and provide complementary insights into the behavior of the sLLF algorithm.

Resource augmentation has been studied for processor scheduling in~\cite{kalyanasundaram2000speed,Phillips2002,liu1973scheduling,dertouzos1989multiprocessor,davis2011survey}. The difference is in our setting is that the power limit is time-varying, the maximum rates are heterogeneous, and the power limit may not necessarily be integer multiplication of the maximum rate. Consequently, value of augmentation also depends on the interplay between $P_{\min}$, $P_{\max}$, $\br_{\min}$, and $\br_{\max}$, which complicates the analysis. As mentioned, in this study we considered power and power+rate augmentations. The framework for each of this augmentation will be discussed below:

\subsubsection{\textbf{Power Augmentation}}
\label{SSS:Power Augmentation}

In this augmentation, we allowed online algorithm to utilize $\epsilon$ more power such that

\begin{align*}
	P^{online}(t) & = (1+\epsilon) P(t) \\ 
	\intertext{but} 
	\br_i^{online} & = \br_i
\end{align*}

\noindent
We will call this augmentation as $\epsilon$-power augmentation, where

\begin{definition}{[$\epsilon$-power augmented instance]}~
	\label{definition:pa_inst}
	Given an EV charging instance $\hI = \{a_i, d_i, e_{i}, \bar{r}_{i}; P(t)\}_{i\in\hV, t\in\hT}$, the instance under $\epsilon$-power augmentation is defined as
	
	\begin{equation*}
		\label{eq:pa}
		\{a_i, d_i, e_{i}, \bar{r}_{i};  (1 + \epsilon ) P(t)\}_{i\in\hV, t\in\hT}
	\end{equation*}

\end{definition}

\begin{definition} [$\epsilon$-power feasibility]
	\label{definition:pa_feasibl}
	An online algorithm $\hA$ is $\epsilon$-power feasible if $\hA$ is feasible on the $\epsilon$-power augmented instances $\hI_p(\epsilon)$ generated from any offline feasible instance $\hI$.\footnote{Alternatively, the (minimum) value of $\epsilon$ can also be interpreted as the constraints on instances that are online feasible. That is, given the original resource $P(t), \br_i (t)$, the algorithm is online feasible for any instances $\hI=\{a_i, d_i, e_{i}, \bar{r}_{i}; P(t)/(1+\epsilon)\}_{i\in\hV, t\in\hT}$ that is offline feasible given the reduced resource $P(t)/(1+\epsilon), \br_i (t)$. Large $\epsilon$ restricts possible instances, thus less likely to be online infeasible.}  
\end{definition}

Unfortunately, there is no online algorithm that $\epsilon$-power feasible for any finite $\epsilon > 0$~\cite{Phillips2002}\footnote{It is shown in \cite{Phillips2002} that the LLF algorithm is not $\epsilon$-power feasible for any $\epsilon > 0$ for uniform processors and time-invariant number of processors. Since their setting is a special case of our setting, the same results extend to our setting.}. However, under a mild assumption, the $\epsilon$-power feasibility condition can be obtained for a finite $\epsilon$. Assume that the energy demand of each EV is bounded by $X$ and that the inter-arrival time between consecutive arrivals is greater than $N$, \ie:

\begin{equation}
	\label{eq:pa_assumptions}
	\begin{aligned}
		& e_i \leq X, & & i  \in \hV \\
		& | a_i - a_j | > N , & & i ,j \in \hV
	\end{aligned}
\end{equation}

\noindent
where, the value of $N$ can be controlled by choosing appropriate sojourn time for a sampling intervals (the shorter the sojourn time, the smaller the value of $N$) and the value of $X$ can be obtain from maximum battery capacity for common EVs. Then, it can be proven (see \cref{apdxS:proof of th 1}) that we can characterize the relation between $N$ and the sufficient amount of resource augmentation $\epsilon$ as follows:

\begin{theorem}
	\label{thm:power_aug_LLF}
	If both conditions in \eqref{eq:pa_assumptions} hold, then the sLLF algorithm is $\epsilon$-power feasible with

	\begin{equation*}
		\epsilon = \frac{ P_{\max} }{ P_{\min} }  \left( \log_\varphi \left(   \frac{ \sqrt{ 5 } X } { N P_{\max}  } + \frac{1}{2} \right) + 2  \right) - 1 
	\end{equation*}
	
	\noindent
	where $\varphi = \frac{1+\sqrt(5)}{2} \approx 1.61803$ is the golden ratio. 
\end{theorem}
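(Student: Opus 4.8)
The plan is to argue by contradiction through \cref{proposition:feasibility condition}: if the sLLF algorithm run on the $\epsilon$-power augmented instance were infeasible, then some EV's laxity would go negative. Let $t^*$ be the first time at which this happens and $i_1$ the offending EV, so $\ell_{i_1}(t^*)<0$ while $\ell_j(t)\ge 0$ for every EV $j$ and every $t<t^*$. Since the original instance is offline feasible, the rate and demand constraints force $d_{i_1}-a_{i_1}\ge e_{i_1}/\br_{i_1}$, hence $\ell_{i_1}(a_{i_1})\ge 0$; combined with the laxity recursion $\ell_{i_1}(t+1)=\ell_{i_1}(t)-1+r_{i_1}(t)/\br_{i_1}$, the laxity of $i_1$ must have dropped by more than its initial value over $[a_{i_1},t^*)$, and every unit of this drop corresponds to a slot in which $i_1$ was charged below its peak rate $\br_{i_1}$.

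Next I would localize the ``blame.'' Using the threshold form in \cref{proposition:valley_filling}, at any slot $t$ where $r_{i_1}(t)<\br_{i_1}$ while $e_{i_1}(t)\ge \br_{i_1}$ we must have $L(t)<\ell_{i_1}(t)$ and the power budget saturated at $(1+\epsilon)P(t)$; hence there is a nonempty blocking set $B(t)=\{j:\ell_j(t)\le\ell_{i_1}(t)\}\setminus\{i_1\}$ of EVs charged at essentially their peak rates, together consuming at least $(1+\epsilon)P(t)-\br_{i_1}\ge(1+\epsilon)P_{\min}-\br_{\max}$. The Persistence property (\cref{lemma:flip_conditions}) guarantees that an EV in $B(t)$ cannot later overtake $i_1$ in laxity until it departs or is fully charged, so the blocking sets over a contiguous stretch are governed by a bounded number of ``incumbents'' plus whatever EVs arrive during that stretch. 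This is exactly where the inter-arrival assumption $|a_i-a_j|>N$ enters: at most about (stretch length)$/N$ newcomers can join, and each carries at most $X$ units of demand, so a newcomer can only block for a bounded number of slots before it is exhausted.

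From here I would peel off levels recursively. EV $i_1$ is under-served on an interval $I_1$ long enough to absorb its laxity deficit; the power during $I_1$ is consumed by EVs ahead of $i_1$, among which the one of least laxity, say $i_2$, must itself have become that urgent either by arriving so or by being under-served on a sub-interval $I_2\subset I_1$, whose presence forces $|I_1|$ to be at least $|I_2|$ plus the length of the level below it. Iterating down a chain $i_1,i_2,\dots,i_n$ and writing $L_k=|I_k|$ (equivalently, the laxity deficit accumulated at level $k$), one obtains a Fibonacci-type recursion $L_{k+1}\ge L_k+L_{k-1}$ with base lengths of order $N$ (this is where $|a_i-a_j|>N$ bounds the newcomers over a given stretch), hence $L_k\gtrsim F_kN$. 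On the other hand a single EV carries at most $X$ of demand, so the deepest level cannot span more than about $X/(NP_{\max})$ inter-arrival periods' worth of blocking; combining, the chain must terminate at some $n$ with $F_n\lesssim X/(NP_{\max})$, and using $F_n=(\varphi^n-(-\varphi)^{-n})/\sqrt5$ this yields $n\le\log_\varphi\!\big(\sqrt5\,X/(NP_{\max})+\tfrac12\big)+O(1)$. Finally, each level demands an extra $\Theta(P_{\max})$ of power per slot to be serviceable, so the augmented budget must satisfy, roughly, $(1+\epsilon)P_{\min}\ge(n+2)P_{\max}$; taking $\epsilon=\tfrac{P_{\max}}{P_{\min}}\big(\log_\varphi(\sqrt5\,X/(NP_{\max})+\tfrac12)+2\big)-1$ makes $1+\epsilon$ just large enough that no chain long enough to drive $\ell_{i_1}$ negative can exist, which is the desired contradiction.

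The delicate part — the one on which the whole argument hinges — is making this peeling rigorous: pinning down the nested intervals $I_k$ and the chain $i_1,\dots,i_n$ so that \cref{lemma:flip_conditions} genuinely controls which EVs sit in the blocking set at each level, correctly handling the cases in \cref{proposition:valley_filling} where an EV's remaining demand $e_i(t)$ drops below $\br_i$ (so it is no longer charged at full rate despite low laxity) or where the power constraint is slack, and carrying the heterogeneous peak rates together with the spread between $P_{\min},P_{\max},\br_{\min},\br_{\max}$ through the recursion so that the constants assemble into precisely the stated closed form. The remaining pieces — the reduction to laxity positivity, using offline feasibility only to obtain $\ell_{i_1}(a_{i_1})\ge 0$, and the arithmetic with Fibonacci numbers and $\varphi$ — are routine once the recursion is in place.
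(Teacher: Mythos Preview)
Your contradiction-plus-Fibonacci outline matches the paper in spirit, but the central decomposition you propose is different from the one the paper actually uses, and your version has a gap at the point you yourself flag as ``delicate.'' The paper does \emph{not} build a chain of individual EVs $i_1,i_2,\dots$ with nested blocking intervals. Instead, after removing the EVs that never fall below the failing ones (\cref{lemma:laxre} and \cref{corollary:equ}), it works with \emph{aggregate} power-level thresholds: for each integer $m\le n:=(1+\epsilon)P_{\min}/P_{\max}$ it defines $t_m$ as the first time from which the total requestable rate $\sum_j\min(\br_j,e_j(\tau))$ stays above $mP_{\max}$ until $t_-$, and sets $T_m=[t_{m-1},t_m)$, $\hat T_m=[t_m,t_-]$. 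The Fibonacci recursion is then proved on these interval lengths via the key technical step (Lemma~\ref{lemma:TandD}), which shows $|T_i|>|\hat T_{i+1}|$ by \emph{comparing the online energy delivery $\Psi^\epsilon$ to the offline delivery $\Psi^*$} on the set $A_{T_i}$. That offline/online comparison is what drives the whole argument: it is how the offline feasibility hypothesis is actually used, and it is what produces the clean recursion without having to track which particular EV blocks which.

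By contrast, you use offline feasibility only to get $\ell_{i_1}(a_{i_1})\ge 0$. That is not enough: your peeling step needs, at each level, a quantitative statement of the form ``the blocking set absorbed at least one $P_{\max}$-slot's worth of energy more under sLLF than an offline schedule could have,'' and that comes precisely from comparing $\Psi^\epsilon$ and $\Psi^*$ on suitably chosen EV subsets. Without it, your claim that ``each level demands an extra $\Theta(P_{\max})$ of power per slot'' and hence $(1+\epsilon)P_{\min}\ge(n+2)P_{\max}$ is not justified---indeed in the paper the inequality runs the other way: $n$ is \emph{defined} as $(1+\epsilon)P_{\min}/P_{\max}$, so the number of power levels is fixed by $\epsilon$, and the Fibonacci growth of $|\hat T_2|\ge f_{n-2}|\hat T_{n-1}|$ together with $|\hat T_2|<X/\br_{\min}$ (from the demand cap $X$) and $|\hat T_{n-1}|\ge P_{\max}N/\br_{\min}$ (from the inter-arrival bound $N$) yields the contradiction. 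If you want to salvage your EV-chain approach, the missing ingredient is an analogue of Lemma~\ref{lemma:TandD}: at each level you must compare what sLLF delivered to what some feasible offline schedule delivered on the \emph{same} set of EVs over the \emph{same} interval, not merely track laxities internally.
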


\noindent 
Now, if the inter-arrival time is equal to $N$ and the power demand is equal to $X$, then the incoming energy demand per unit time is $\frac{X}{N}$. Since the total power supply is $P_{\max}$ per unit time, $N$ should be at least $\frac{X}{P_{max}}$ for offline feasibility which is a mild assumption. With this, we then can apply a special condition to \cref{thm:power_aug_LLF}:

\begin{corollary}
	\label{corollary:power_aug_LLF}
	For constant power limit $P(t) = P, t \in \hT$, and $N \geq \frac{X}{P_{\max}}$, then 
	
	\begin{align*}
	    \epsilon & \leq \frac{ P }{ P }  \left( \log_\varphi \left(   \frac{ \sqrt{ 5 } X } { \left(\frac{X}{P_{\max}}\right) P_{\max}  } + \frac{1}{2} \right) + 2  \right) - 1 \\
	    & = \log_\varphi \left( \sqrt{ 5 } + \frac{1}{2} \right) + 1 \approx 3.091639884 \\
	    \intertext{thus}
	    \epsilon & \lesssim 3.091639884
	\end{align*}
	
	\noindent
	Therefore, the sLLF algorithm is approximately $3$-power feasible.
\end{corollary}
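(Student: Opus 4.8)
The plan is to obtain \cref{corollary:power_aug_LLF} as a direct specialization of \cref{thm:power_aug_LLF}, so the whole argument is a substitution followed by a short monotonicity/arithmetic step. First I would record the effect of the constant-power hypothesis: $P(t) = P$ for all $t \in \hT$ forces $P_{\min} = P_{\max} = P$, so the prefactor $P_{\max}/P_{\min}$ appearing in the formula for $\epsilon$ collapses to $1$. This is the only place the time-variation of the power supply enters the bound, and removing it is precisely what makes the constant clean.

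Next I would exploit the assumption $N \geq \frac{X}{P_{\max}}$. Since $\varphi > 1$, the map $\log_\varphi(\cdot)$ is strictly increasing, while the argument $\frac{\sqrt{5}\,X}{N P_{\max}} + \frac12$ is strictly decreasing in $N$; hence $\epsilon$ is a decreasing function of $N$, and plugging in the smallest admissible value $N = \frac{X}{P_{\max}}$ gives an upper bound. Concretely, $\frac{\sqrt{5}\,X}{N P_{\max}} \leq \frac{\sqrt{5}\,X}{(X/P_{\max})\,P_{\max}} = \sqrt{5}$, so
$$\epsilon \;\leq\; \log_\varphi\!\left(\sqrt{5} + \tfrac12\right) + 2 - 1 \;=\; \log_\varphi\!\left(\sqrt{5} + \tfrac12\right) + 1 .$$

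Finally I would evaluate this constant: $\sqrt{5} + \tfrac12 \approx 2.7361$ and $\ln\varphi \approx 0.4812$, so $\log_\varphi\!\left(\sqrt{5} + \tfrac12\right) = \ln(2.7361)/\ln\varphi \approx 2.0916$, giving $\epsilon \lesssim 3.0916$, i.e.\ the sLLF algorithm is (essentially) $3$-power feasible. There is no genuine obstacle here — the corollary is an immediate instance of \cref{thm:power_aug_LLF} — the only points worth a sentence are (i) invoking monotonicity of $\log_\varphi$ before substituting the lower bound on $N$, and (ii) remarking that $N \geq \frac{X}{P_{\max}}$ is not an additional restriction but is itself necessary for offline feasibility whenever arrivals are spaced exactly $N$ apart each with demand exactly $X$, since then the mean incoming demand rate $X/N$ must not exceed the supply rate $P_{\max}$.
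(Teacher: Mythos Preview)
Your proposal is correct and follows exactly the paper's approach: the corollary is obtained by direct substitution of $P_{\min}=P_{\max}=P$ and $N=X/P_{\max}$ into the bound of \cref{thm:power_aug_LLF}, followed by the numerical evaluation. Your explicit mention of the monotonicity of $\epsilon$ in $N$ (to justify replacing $N$ by its lower bound) is a welcome clarification that the paper leaves implicit.
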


\subsubsection{\textbf{Power+Rate Augmentation}}
\label{SSS:Power and Rate Augmentation}

In this case, online algorithm is allowed to utilize $\epsilon$ more power and higher maximum charging rate such that

\begin{align*}
	P^{online}(t) & = (1+\epsilon) P(t) \\ 
	\intertext{and } 
	\br_i^{online} & = (1+\epsilon) \br_i
\end{align*}

\noindent
We will call this augmentation as $\epsilon$-power+rate augmentation, where

\begin{definition}{[$\epsilon$-power+rate augmented instance]}~
	\label{definition:pra_inst}
	Given an EV charging instance $\hI = \{a_i, d_i, e_{i}, \bar{r}_{i}; P(t)\}_{i\in\hV, t\in\hT}$, we define the $\epsilon$-power+rate augmented instance as

	\begin{equation*}
		\label{eq:pra}
		\{a_i, d_i, e_{i},  (1 + \epsilon ) \bar{r}_{i};  (1 + \epsilon ) P(t)\}_{i\in\hV, t\in\hT}
	\end{equation*}

\end{definition}

\begin{definition} [$\epsilon$-power+rate feasibility]
	\label{definition:pra_feasibl}
	An online algorithm $\hA$ is $\epsilon$-power+rate feasible if $\hA$ is feasible on the $\epsilon$-power+rate augmented instances $\hI_{pr}(\epsilon)$ generated from any offline feasible instance $\hI$.   
\end{definition}

However, unlike the case of power augmentation, the sLLF algorithm is $\epsilon$-power+rate feasible for a finite value of $\epsilon > 0$ without any assumptions of the arrival patterns (see \cref{apdxS:proof of th 2}).

\begin{theorem}
	\label{thm:achievable_rate}
	The sLLF algorithm is $\epsilon$-power+rate feasible with
	
	\begin{equation*}
		\epsilon = \max_{i \in \hV} \left( \max_{ \tau_1 , \tau_2 \in [ a_i, d_i ]} \frac{P (\tau_1) }{ P (\tau_2) }  -  \max_{\tau \in [a_i, d_i]} \frac{  \br_i }{ P(\tau) } \right)
	\end{equation*}
 
\end{theorem}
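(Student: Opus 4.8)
The plan is to reduce the claim to a statement about laxity and then argue by contradiction with a work‑accounting (``busy period'') argument of the type used in speed‑augmentation proofs for deadline scheduling, adapted to heterogeneous rates and a time‑varying power limit. By \cref{proposition:feasibility condition} it suffices to show that when sLLF is run on an $\epsilon$-power+rate augmented instance $\hI_{pr}(\epsilon)$ (so $\br_i^{online}=(1+\epsilon)\br_i$ and $P^{online}(t)=(1+\epsilon)P(t)$) the resulting augmented laxity $\ell_i(t)=[d_i-t]^+-e_i(t)/\br_i^{online}$ stays non‑negative for every $i\in\hV$ and $t\in\hT$. Since laxity is non‑increasing on $[a_i,d_i)$ and equals $-e_i(d_i)/\br_i^{online}$ at $t=d_i$, this is the same as saying every EV is fully charged by its deadline. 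Moreover, offline feasibility of $\hI$ forces $e_i\le\br_i(d_i-a_i)$, whence $\ell_i(a_i)\ge(d_i-a_i)\,\epsilon/(1+\epsilon)\ge0$, so feasibility can only be lost through slots at which an EV is charged strictly below its augmented cap while still carrying work.

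First I would suppose, for contradiction, that some EV $i^\star$ is not fully charged by $d_{i^\star}$. Two structural facts follow from the threshold characterization \eqref{eq:threshold}--\eqref{eq:threshold2} of sLLF (with $\br_i,P$ replaced by $\br_i^{online},P^{online}$): (i) at any slot $t$ with $r_{i^\star}(t)<\min(\br_{i^\star}^{online},e_{i^\star}(t))$ the power constraint is tight, $\sum_{j\in\hV_t}r_j(t)=P^{online}(t)$, since otherwise every EV --- including $i^\star$ --- would receive its full cap; and (ii) at such a slot the water level obeys $L(t)<\ell_{i^\star}(t)$, so every EV $j$ with $r_j(t)>0$ satisfies $\ell_j(t)<\ell_{i^\star}(t)+1$, i.e.\ is essentially at least as urgent as $i^\star$. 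I would then walk backward from $d_{i^\star}$ to define a critical interval $[t_1,d_{i^\star})$ as the maximal suffix of $[a_{i^\star},d_{i^\star})$ throughout which the power is tight and the charged EVs are confined (up to a single boundary EV) to windows inside $[t_1,d_{i^\star})$, and let $\hS$ be the set of EVs charged during it, so $i^\star\in\hS$; the persistence property \cref{lemma:flip_conditions} is used to control laxity orderings at the left endpoint.

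The accounting then runs as follows. Inside $[t_1,d_{i^\star})$ the augmented sLLF delivers $(1+\epsilon)\sum_{t=t_1}^{d_{i^\star}-1}P(t)$ units of energy, all to EVs in $\hS$, and $i^\star$ still ends with a deficit; since --- up to the energy a single boundary EV may have received before $t_1$, which the construction bounds by a $\br_{i^\star}$ worth --- the EVs of $\hS$ have their entire demand confined to $[t_1,d_{i^\star})$, this forces $\sum_{j\in\hS}e_j>(1+\epsilon)\sum_{t=t_1}^{d_{i^\star}-1}P(t)-\br_{i^\star}$. On the other hand any profile $r^{\mathrm{off}}$ witnessing offline feasibility of $\hI$ delivers at most $\sum_{j\in\hS}\sum_{t=t_1}^{d_{i^\star}-1}r_j^{\mathrm{off}}(t)\le\sum_{t=t_1}^{d_{i^\star}-1}P(t)$ to $\hS$ within $[t_1,d_{i^\star})$, yet must deliver all of $\sum_{j\in\hS}e_j$ there, so $\sum_{j\in\hS}e_j\le\sum_{t=t_1}^{d_{i^\star}-1}P(t)$. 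Combining the two inequalities and bounding $\sum_{t=t_1}^{d_{i^\star}-1}P(t)$ below by $(d_{i^\star}-t_1)\min_{\tau\in[a_{i^\star},d_{i^\star}]}P(\tau)$ on the augmented side and above by $(d_{i^\star}-t_1)\max_{\tau\in[a_{i^\star},d_{i^\star}]}P(\tau)$ on the offline side yields $\epsilon\,\min_\tau P(\tau)<\max_\tau P(\tau)-\br_{i^\star}$, i.e.\ $\epsilon<\max_{\tau_1,\tau_2\in[a_{i^\star},d_{i^\star}]}\frac{P(\tau_1)}{P(\tau_2)}-\max_{\tau\in[a_{i^\star},d_{i^\star}]}\frac{\br_{i^\star}}{P(\tau)}$, which contradicts the stated choice of $\epsilon$ (obtained by maximizing this expression over $i^\star\in\hV$). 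Hence every EV finishes, and \cref{proposition:feasibility condition} gives $\epsilon$-power+rate feasibility.

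The main obstacle is making the critical‑interval construction and its accounting rigorous in our model rather than the classical identical‑machine one. Unlike earliest‑deadline‑first on discrete processors, sLLF allocates power by a continuous water‑filling, so ``the set of EVs being served'' is graded by the level $L(t)$ rather than sharply defined, the per‑EV rates $\br_i$ are heterogeneous, and $P(t)$ varies with $t$ and need not be an integer multiple of any $\br_i$. The heterogeneity and fractionality are precisely what produce the $\br_{i^\star}/\min_\tau P(\tau)$ correction term --- the ``boundary EV'' that may straddle $t_1$, analogous to the classical $-1/m$ slack --- and the time variation is what forces the $\max/\min$ power ratio over $i^\star$'s window. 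I expect the delicate step to be showing that throughout $[t_1,d_{i^\star})$ essentially all of $P^{online}(t)$ is consumed by EVs whose windows lie inside $[t_1,d_{i^\star})$, so that offline feasibility can be invoked with only the single‑EV correction and nothing more.
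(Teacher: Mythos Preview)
Your contradiction-plus-work-accounting instinct is right, but the specific construction has two genuine gaps that the paper's proof avoids.

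\textbf{The ``windows inside $[t_1,d_{i^\star})$'' claim does not hold, and patching it breaks tightness.} Low laxity does not imply an early deadline: an EV $j$ with $d_j>d_{i^\star}$ but large remaining demand can have $\ell_j(t)<\ell_{i^\star}(t)$ throughout, be charged during your busy interval, yet be served by the offline \emph{after} $d_{i^\star}$. So $\sum_{j\in\hS}e_j\le\sum_{t=t_1}^{d_{i^\star}-1}P(t)$ is unjustified. The paper handles this not locally but globally: it removes the set $\hS_2$ of EVs that remain at $t_-$ with non-negative laxity (\cref{corollary:equ}), obtaining a sub-instance $\tilde\hI$ on which the sLLF trajectory of the remaining EVs is unchanged and which is still offline feasible. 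But once $\hS_2$ is gone, the augmented power is generally \emph{not} tight throughout any suffix of $[a_{i^\star},d_{i^\star})$, so your whole busy-interval premise collapses. Note too that if your premises did hold exactly (power tight, all windows inside), your two inequalities would give $(1+\epsilon)\sum_t P(t)<\sum_{j\in\hS}e_j\le\sum_t P(t)$, i.e.\ $\epsilon<0$ --- a signal that the setup is over-constrained rather than that a single $\br_{i^\star}$ correction suffices.

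\textbf{The $\br_{i^\star}$ term does not arise from a boundary EV, and your arithmetic does not give the claimed inequality.} From $\sum_{j\in\hS}e_j>(1+\epsilon)\sum_t P(t)-\br_{i^\star}$ and $\sum_{j\in\hS}e_j\le\sum_t P(t)$ you obtain $\epsilon\sum_t P(t)<\br_{i^\star}$, and after your $\min/\max$ bounds at best $(1+\epsilon)\min_\tau P-\br_{i^\star}/(d_{i^\star}-t_1)<\max_\tau P$; neither is $\epsilon\min_\tau P<\max_\tau P-\br_{i^\star}$. In the paper the rate term enters through a completely different mechanism: working in $\tilde\hI$, one picks the \emph{earliest-arriving} EV $j$ for which both the total and the individual energy delivered by sLLF fall behind the offline, and splits $[a_j,t]$ into over-loaded slots $T_o$ (power tight) and under-loaded slots $T_u$ (power slack). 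During $T_u$ the EV $j$ receives its full augmented rate $(1+\epsilon)\br_j$, giving the \emph{individual} bound $(1+\epsilon)|T_u|\br_j<\Psi^*_{[a_j:t]}(\{j\})\le(t+1-a_j)\br_j$; during $T_o$ the total delivered is $(1+\epsilon)P(t)$, giving the \emph{total} bound $(1+\epsilon)\big(|T_u|\br_j+|T_o|\min_\tau P\big)<(t+1-a_j)\max_\tau P$. Combining these two inequalities and using $|T_u|+|T_o|=t+1-a_j$ is what produces $\epsilon\min_\tau P<\max_\tau P-\br_j$. So the missing ingredients are the $\hS_2$-removal, the earliest-arrival choice of $j$ (which replaces your backward busy-interval and lets you start from $a_j$ with no prior deficit), and the pair of inequalities coming from the $T_o/T_u$ split rather than a single total-demand comparison.
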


\subsection{Experimental Setup}
\label{SS:Experimental Setup}

We employed trace-based simulation on real EV datasets from the ACN deployment (CAGarage) as well as Google's facilities in Mountain View (Google\_mtv) and Sunnyvale (Google\_svl) to evaluate the performance of our proposed algorithm. The datasets contain a total of 52,362 charging sessions over 
more than 4,000 charging days in 2016 at 104 locations (\cref{table: instance_stat}  provides a summary of the data), an instance consists of one charging day. We compute the minimum power capacity in which each instance is feasible by using an offline LP, \textit{i.e.,} we minimize $P(t)$ subject to \cref{eq:rate constraint,eq:power constraint,eq:energy demand constraint}, which corresponds to the minimum power supply for the instance to be offline feasible. We used this minimum power supply to generate an offline instance and tested if the instance is feasible under an online algorithm. 

\begin{table}[htbp]
	\normalsize
	\centering
	\begin{tabular}{|l|c|c|c|}
		\hline
		\multicolumn{1}{|c|}{datasets} & instances & \begin{tabular}{@{}c@{}} EV sojourn \\ time (min) \end{tabular} & laxity (min) \\
		\hline\hline
		CAGarage & 92 & \begin{tabular}{@{}c@{}} 321 \\ (11, 720) \end{tabular} & \begin{tabular}{@{}c@{}} 231 \\ (0.1, 660) \end{tabular} \\
		\hline
		Google\_mtv & 3793 & \begin{tabular}{@{}c@{}} 149 \\ (10, 720) \end{tabular} & \begin{tabular}{@{}c@{}} 35 \\ (0.001, 694) \end{tabular} \\
		\hline
		Google\_svl & 246 & \begin{tabular}{@{}c@{}} 152 \\ (11, 720) \end{tabular} & \begin{tabular}{@{}c@{}} 38 \\ (0.02, 676) \end{tabular} \\
		\hline
	\end{tabular}
	\caption{Statistics of the EV charging instances that show the average of the sojourn times and laxity in minutes unit; the minimum and maximum values are indicated inside the brackets.}
	\label{table: instance_stat}
\end{table}

We compared the performance of the sLLF algorithm against several common heuristic online scheduling algorithms: earliest-deadline-first (EDF)~\cite{Liu1973SchedulingEnvironment,Stankovic:1998:DSR:552538}, least-laxity-first (LLF)~\cite{Mok1983FundamentalEnvironment}, equal/fair share (ES)~\cite{Kay1988AScheduler}, remaining energy proportional (REP)~\cite{}, and an online linear program (OLP)~\cite{Guo}. The implementation of these algorithms for the current problem can be summarized as follows:

\begin{itemize}
    \label{itm:algorithms summary}

	\item In the EDF algorithm, all EVs in $\hV_t$ are sorted by their deadlines $(d_i)$ in increasing order. The available power at a given time, $P(t)$, is assigned to EVs in this order up to $\min (\br_i , e_i(t))$.
	
	\item In the LLF algorithm, all EVs in $\hV_t$ are sorted by their laxity, $\ell_i(t)$, in increasing order. The available power at a given time, $P(t)$, is assigned to EVs in this order up to $\min (\br_i , e_i(t))$.
	
	\item In the ES algorithm, the available power supply at a given time, $P(t)$, is divided to all connected EVs such that each of them receives the minimum between their fair share and maximum charging rate. The procedure is repeated until either all $P(t)$ has been distributed or there is no more EV can be charged further.
	
	\item In the REP algorithm, the available power supply at a given time, $P(t)$, is divided to EVs in proportion to their remaining energy demand, $e_i(t)$. Here, each EV receives the minimum between their proportional share and maximum charging rate. The procedure is repeated until either all $P(t)$ has been distributed or there is no more EV can be charged further.
	
	\item In the OLP algorithm, the charging rate of EV $i \in \hV$ at a given time, $r_i(t)$, is provided according to the solution of the following LP:
	
	\begin{equation}
		\label{eq:OLP}
		\begin{aligned}
			r_i(t) = \ & \argmin_{r_i(t)} & & \sum_{i \in \hV_t} \sum_{\tau=t}^T \tau r_i(\tau) \\
			& \text{subject to} & & \sum_{\tau=t}^T r_i(\tau) = e_i(t),\  \forall i \in U_t \\
			& & & \sum_{i \in U_\tau} r_i(\tau) \le P(\tau), \  \forall \tau = t, \ldots, T \\
			& & & 0 \le r_i(t) \le \br_i \\
		\end{aligned}
	\end{equation}
	
	Here, the objective function encourages the charging station to charge EVs as early as possible, while the constraints ensure the online LP finds a feasible schedule for all the currently active EVs assuming no future arrivals.  
	
\end{itemize}

\noindent
The performance comparison of the proposed sLLF algorithm against the above algorithms can be seen in \cref{S:Performance Evaluation and Comparison} below.

\section{Performance Evaluation and Comparison}
\label{S:Performance Evaluation and Comparison}

In this section, we will evaluate the performance of the sLLF algorithm and compare it against other algorithms listed in \cref{itm:algorithms summary}. We will first evaluate the success rate of the online algorithms without resource augmentation (\cref{SS:Without Resource Augmentation}) before further analyze their performance with \textbf{1)} power augmentation and \textbf{2)} power+rate augmentation (\cref{SS:With Resource Augmentation}). For this purpose, we define the success rate of an algorithm as the ratio of online feasible instances under the algorithm to all existing instances.

\subsection{Without Resource Augmentation}
\label{SS:Without Resource Augmentation}

\begin{figure}[htbp]
	\centering
	\includegraphics[width=.45\textwidth]{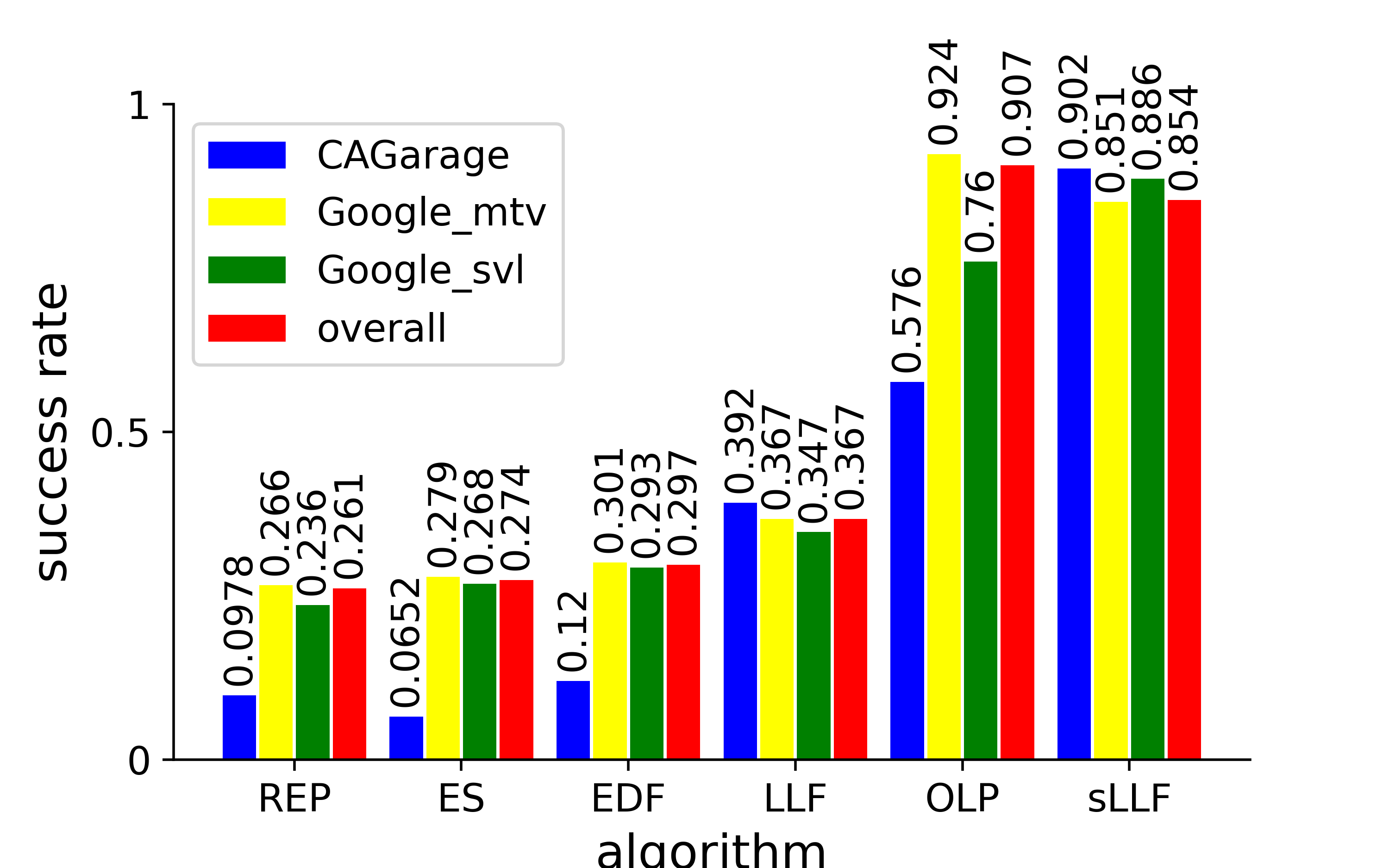}
	\caption{Bar chart showing the success rate of the utilized algorithms in finding a feasible online schedule from different datasets without resource augmentation. The displayed values are rounded to three significant figures.}
	\label{F:SuccessRate_no_Augmentation}
\end{figure}

Comparing the success rate of the sLLF algorithm against different algorithms, summarised in \cref{itm:algorithms summary}, we found that our proposed algorithm achieves a more uniform high success across different datasets (see \cref{F:SuccessRate_no_Augmentation}). From \cref{F:SuccessRate_no_Augmentation} we can also see that the EDF, ES, and REP algorithms perform much worse in terms of finding feasible schedules as expected because these algorithms do not consider the deadline, maximum charging rate, and remaining energy of each EV simultaneously which are necessary to find the feasibility. We can also see that, despite its similarity, the LLF algorithm achieve lower success rate than the sLLF algorithm that suggests the importance of maximizing minimum laxity to eliminate the infeasibility of certain offline feasible instances in the LLF algorithm (see \cref{SS:Smoothed Least-Laxity-First Algorithm}). 

Moreover, although the OLP algorithm achieves a higher success rate in finding a feasible online schedule from Google\_mtv dataset, it requires solving LP problem at every time-step. With the currently available LP solver, the computational complexity for solving LP problem of size $\bf{n}$ will be greater than $O(\bf{n}^2)$ \cite{Jiang2020FasterDM}. Thus, at every given time $t$ the OLP algorithm has computational complexity higher than $O(|\hV_t|^2)$ which is computationally more expensive than the sLLF algorithm that has complexity of $O(|\hV_t|+\log(1/\delta))$. %For instance, simulating $1000$ random instances of EVs charging for $100$ times under the exact same computation and simulation environment, the average processing time at each time step is about $0.008\text{ }s$ and $0.0003\text{ }s$ utilizing the OLP and sLLF algorithms respectively. Here, for each simulation, $1000$ vehicles with random arrival times between $0$ and $100$ are generated. Each vehicle will have random sojourn time from $50$ to $100$ and random energy demand from $5000$ to $10000$. The maximum charging rate possible for each vehicle is the $200$ and the power limit of the charging station is $100000$.

\begin{figure}[htbp]
	\centering
	\begin{subfigure}{.24\textwidth}
		\caption{}
		\includegraphics[width=\textwidth]{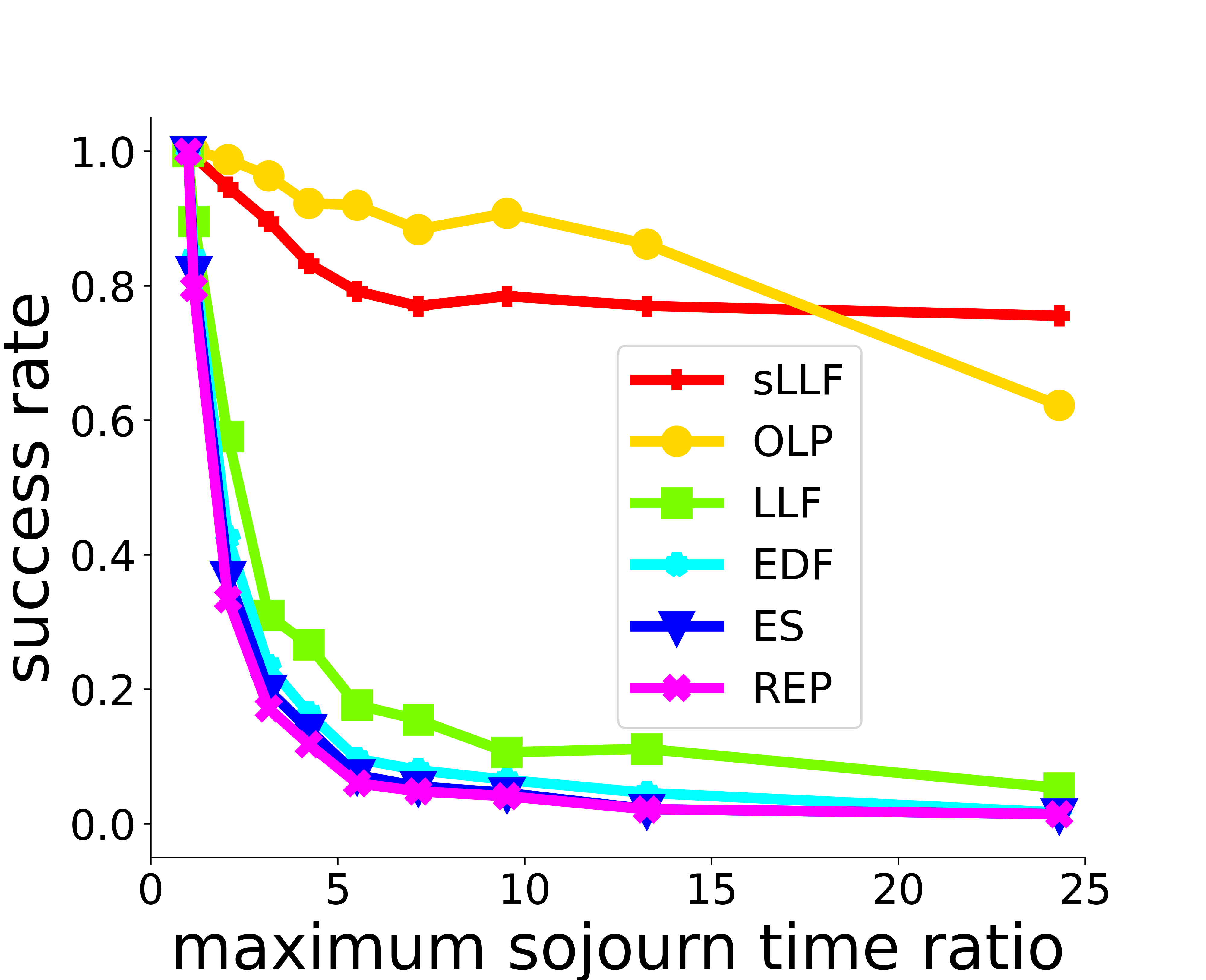}
		\label{SF:SuccessRate_vs_ratio}
	\end{subfigure}
	\begin{subfigure}{.24\textwidth}
		\caption{}
		\includegraphics[width=\textwidth]{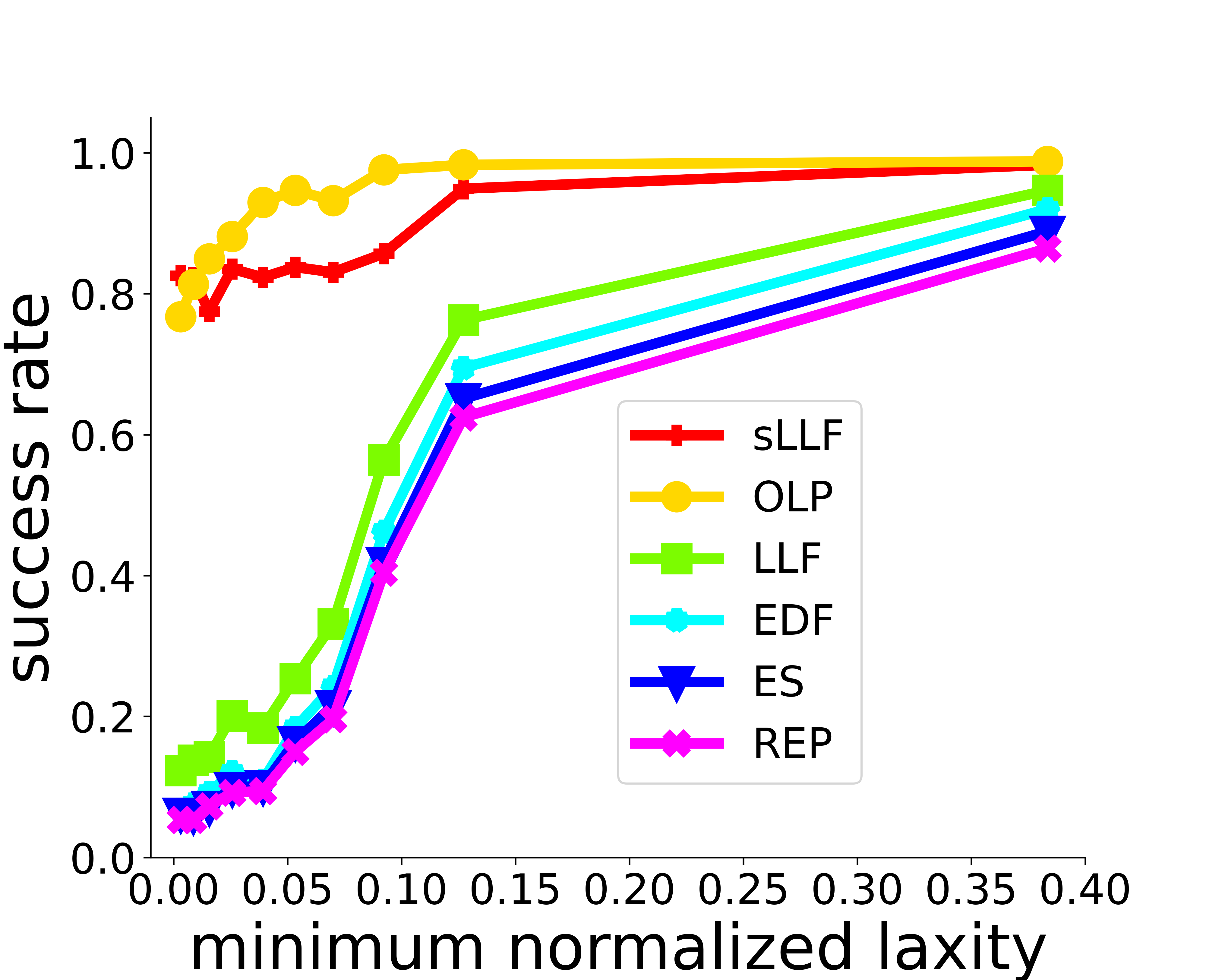}
		\label{SF:SuccessRate_vs_Laxity}
	\end{subfigure}
	\caption{Plots of success rate in finding a feasible online schedule without resource augmentation versus (a) the maximum ratio between EV sojourn times and (b) the minimum normalized laxity.}
	\label{F:SuccessRate_no_Augmentation_vs_RatioAndLaxity}
\end{figure}

Furthermore, we observed that the minimum normalized laxity and the maximum ratio between EV sojourn times have high correlations with the success rate of the algorithms (see \cref{F:SuccessRate_no_Augmentation_vs_RatioAndLaxity}). Here, the maximum ratio between EV sojourn times is defined as the maximum ratio between the longest and shortest EV sojourn times in the instances while the minimum normalized laxity of an EV is defined as the laxity divided by the EV sojourn times, $\ell_i(a_i)/(d_i - a_i)$. To study this, we categorized the dataset we have into different sets of data and the success rate of the algorithms based on these different data categories can be found in \cref{F:SuccessRate_no_Augmentation_vs_RatioAndLaxity}. \cref{SF:SuccessRate_vs_ratio} shows that as the maximum ratio between EV sojourn times increases, all algorithms considered have decreased success rates. This indicates that a large degree of variation in the sojourn time may decrease the performance of online scheduling algorithms. From the \cref{SF:SuccessRate_vs_ratio} we can also see that the sLLF algorithm is least sensitive to the changing of the maximum ratio between EV sojourn times while still maintaining a high success rate. This shows the benefit of the sLLF algorithm against a large variation of EVs' sojourn time that common in real-world applications.

Meanwhile, \cref{SF:SuccessRate_vs_Laxity} shows that higher minimum normalized laxity improves the algorithms' success rate which implies that shorter sojourn time is more desirable to improve the performance of the scheduling algorithms. The result shown in \cref{SF:SuccessRate_vs_Laxity} also indicates that larger laxity gives a greater advantage in the scheduling system which is expected as a less urgent environment is easier to maintain. As we can see in \cref{SF:SuccessRate_vs_Laxity}, the sLLF algorithm has one of the highest success rates for all minimum normalized laxity even when the minimum normalized laxity is small. This shows the benefit of the sLLF algorithm in a high urgency scheduling environment such as in some public charging stations. Additionally, a larger laxity can also be associated with higher resources which leads to the benefit of higher power supply and/or peak charging rate as will be discussed further in \cref{SS:With Resource Augmentation}.

\subsection{With Resource Augmentation}
\label{SS:With Resource Augmentation}

\begin{figure}[htbp]
	\centering
	\begin{subfigure}{.24\textwidth}
		\caption{power augmentation}
		\includegraphics[width=\textwidth]{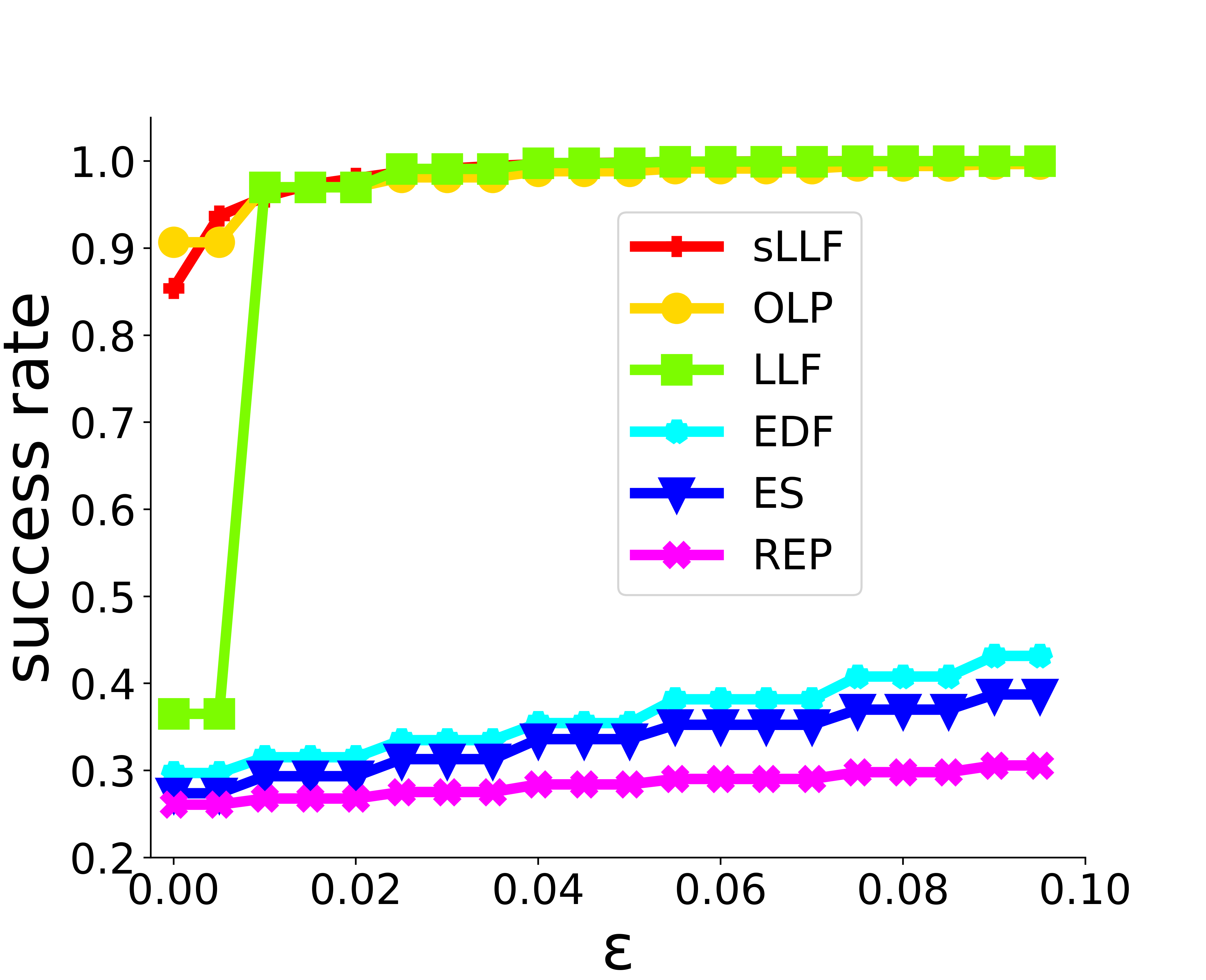}
		\label{SF:SuccessRate_with_PowerAugmentation}
	\end{subfigure}
	\begin{subfigure}{.24\textwidth}
		\caption{power+rate augmentation}
		\includegraphics[width=\textwidth]{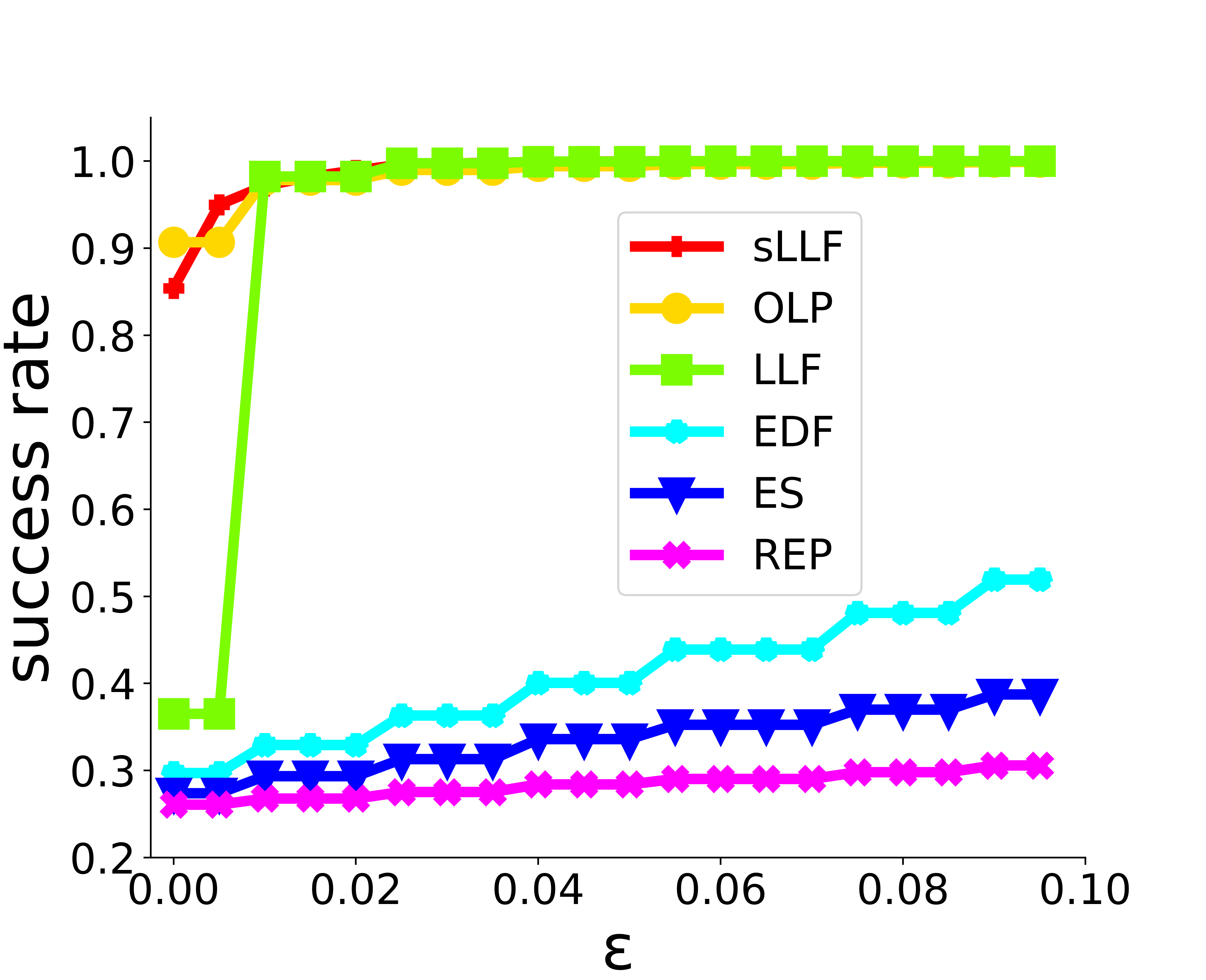}
		\label{SF:SuccessRate_with_Power+RateAugmentation}
	\end{subfigure}
	\caption{Plots of success rate in finding a feasible online schedule with (a) power supply augmentation and (b) power supply along with peak charging rate augmentation.}
	\label{F:SuccessRate_with_Augmentations}
\end{figure}

We analyzed the performance of the online algorithms with resource augmentation in \textbf{a)} power supply and \textbf{b)} both power supply along with peak charging rate to gain further insight into the algorithms' behavior (see \cref{F:SuccessRate_with_Augmentations}). As expected, the success rate of all algorithms increase with more available resources that suggest the benefit of higher power supply and/or peak charging rate in the scheduling system. We can see from \cref{F:SuccessRate_with_Augmentations} that sLLF and OLP algorithms have the highest success rate among other algorithms under the various level of resource augmentation. Although the performance of the sLLF algorithm in the event without resource augmentation is lower than the OLP algorithm, it can achieve a $0.95$ success rate with only a $0.02$ increase in resources.

\begin{table}[htbp]
	\normalsize
	\centering
	\begin{tabular}{|l|c|c|c|c|c|c|}
		\hline
		\multirow{2}{*}{augmentation} & \multicolumn{6}{c|}{$\epsilon$} \\
		\cline{2-7}
		& REP & ES & EDF & LLF & OLP & sLLF \\
		\hline
		\hline
		power supply & 4.61 & 3.65 & 1.39 & 0.07 & 0.28 & 0.07 \\
		\hline
		\begin{tabular}{@{}l@{}} power supply \\ and peak \\ charging rate \end{tabular}  & 4.61 & 3.24 & 0.54 & 0.05 & 0.28 & 0.05 \\
		\hline
	\end{tabular}
	\caption{Minimum resource augmentation to achieve a perfect success rate in finding a feasible online schedule for all instances.}
	\label{table: eps min}
\end{table}

Inspecting further, we listed in \cref{table: eps min} the minimum resource augmentation required for each algorithm to achieve $100\%$ feasibility for all instances. From the table, we can see that, together with the LLF algorithm, our sLLF algorithm has the smallest $\epsilon$ among the algorithms considered. The proposed algorithm can achieve perfect feasibility using only $0.07$ power augmentation which is significantly smaller than the predicted value in \cref{corollary:power_aug_LLF}. Thus, our proposed algorithm has the potential in reducing the infrastructure cost for EV charging facility which will also be beneficial in an application where the resources are limited.

\begin{figure}[htbp]
	\centering
	\begin{subfigure}{.24\textwidth}
		\caption{LLF}
		\includegraphics[width=\textwidth]{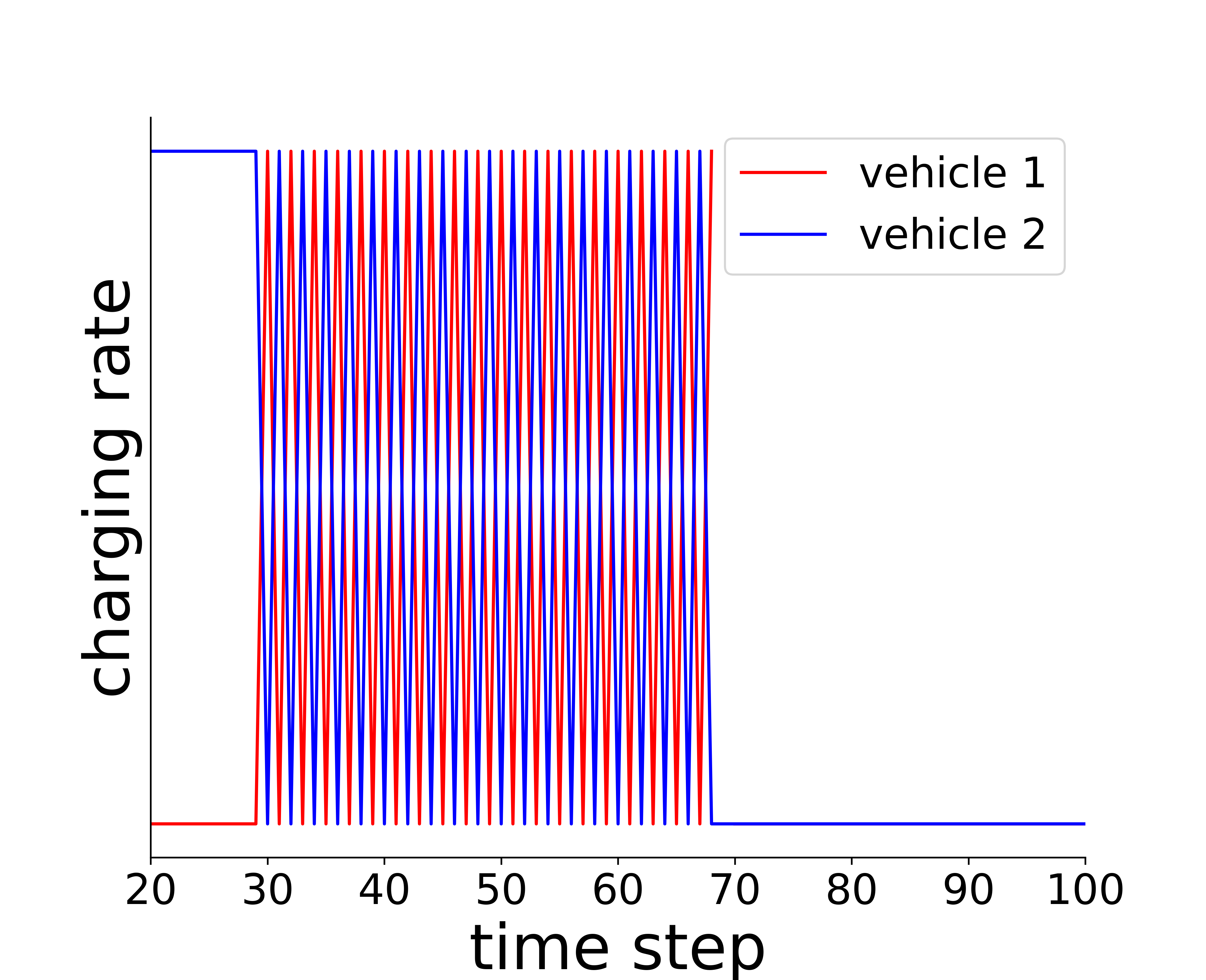}
		\label{SF:oscillation_LLF}
	\end{subfigure}
	\begin{subfigure}{.24\textwidth}
		\caption{sLLF}
		\includegraphics[width=\textwidth]{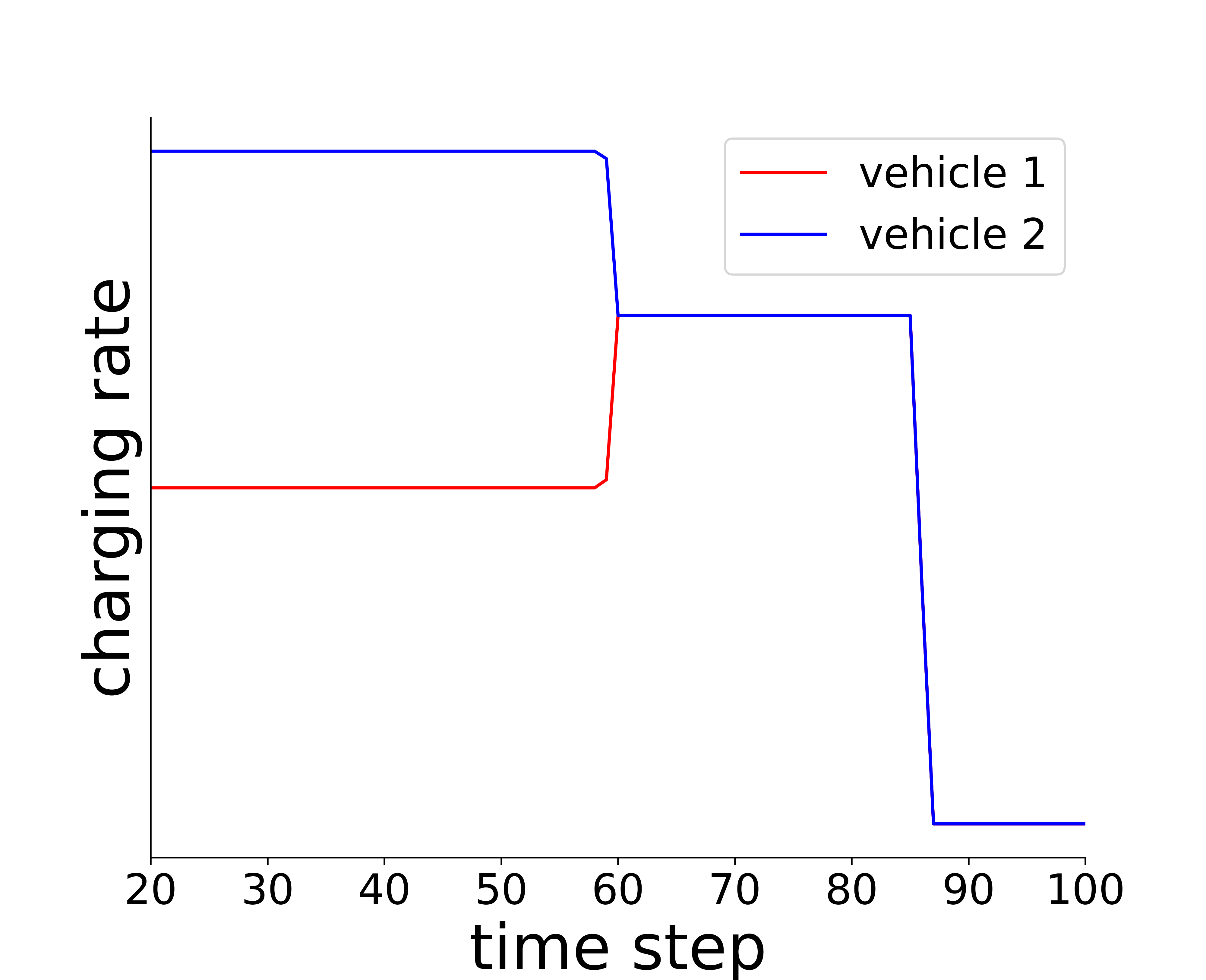}
		\label{SF:oscillation_sLLF}
	\end{subfigure}
	\caption{Charging rate from two vehicles at each time step of a hypothetical case that exaggerate the oscillation behaviour in the LLF algorithm obtained using (a) the LLF and (b) the sLLF algorithms.}
	\label{F:oscillation_comparison}
\end{figure}

Additionally, although the LLF and sLLF algorithms have equal $\epsilon$ feasibility, the sLLF algorithm, as implies by \cref{lemma:flip_conditions}, does not exhibit undesirable oscillations behavior such can be found in the LLF algorithm (\cref{FN:LLF_ossi}). To inspect this property, we simulated a hypothetical case that exaggerate the oscillation behavior in the LLF algorithm. In the simulation we introduced two vehicles with equal maximum charging rates that arrive and will depart at the same time where there isn't other vehicle present at the charging station. The charging rate at each time step of this simulation obtained using the LLF and sLLF algorithms can be seen in \cref{SF:oscillation_LLF,SF:oscillation_sLLF} respectively. The result in \cref{F:oscillation_comparison}, agrees with \cref{lemma:flip_conditions}, shows that the sLLF algorithm eliminates the oscillation behavior that can reduce the lifetime of certain batteries.

\section{Conclusion}
\label{S:Conclusion}

In this work, we formulated EV charging as a feasibility problem that meets all EVs' energy demands before departure under the individual charging rate of each EV and the total power resource constraints. We then proposed an online algorithm, the sLLF algorithm, that decides on the current charging rates based on only the information up to the current time. We characterized and analyzed the performance of the sLLF algorithm analytically and numerically utilizing the resource augmentation framework, where we demonstrated the first application of the framework for heterogeneous processors with a time-varying number.

Our numerical experiments with real-world datasets showed that our algorithm has a significantly higher rate of generating feasible EV charging than several other common EV charging algorithms. We showed that our sLLF algorithm is able to maintain a high success rate and less sensitive to a large variety of EVs' sojourn time that common in a real-world application. The algorithm also shows its benefit in a high urgency scheduling environment such as in some public charging stations. By finding feasible EV charging schedules using only a small augmentation that is also significantly less than the theoretical upper bound, our proposed algorithm (sLLF) can significantly reduce the infrastructural cost for EV charging facilities. Among the algorithms that achieve the highest success rate (\ie the sLLF, LLF, and OLP algorithms), our sLLF algorithm does not exhibit undesirable oscillations such found in the LLF algorithm and computationally cheaper than the OLP algorithm.

\appendix
\crefalias{section}{appsec}
\crefalias{subsection}{appsec}

To provide proofs to the \cref{thm:power_aug_LLF,thm:achievable_rate} we presented in our resource augmentation framework (\cref{SS:Augmentation Theory}), we first, in addition to the notations in \cref{table:notations 1}, introduce some additional notation that will be used in this section, summarized in \cref{table:notations 2}. Here, $A_t=\{ i \in \hV: a_i \leq t\}$ denotes the set of EVs that have arrived by time $t$; $D_t=\{ i \in A_t: d_i \leq t \text{ or } e_i(t) = 0  \}$ denotes the set of EVs that have either departed or finished charging by time $t$; and $U_t= \{ i \in V_t :  e_i(t) > 0\}$ denotes the set of EVs with unfulfilled energy demand at the beginning of time slot $t$. In addition, $A_{[t_1,t_2]}  = \{ i \in \hV: a_i \in [t_1,t_2] \}$ denotes the set of EVs that arrive during time interval $[t_1,t_2]$, $t_1, t_2 \in \hT$. We define $\epsilon$ as the fraction of additional resources to be augmented such that if we augment the power supply $(P)$ then the total power becomes $(1+\epsilon)P$, similarly, for the peak charging rate $(\bar{r})$ augmentation the total rate becomes $(1+\epsilon)\bar{r}$. With this, we also define notations for the total energy supply to EVs in set $\hS\subseteq\hV$ during the interval $[t_1, t_2]$, $\sum_{i \in \hS} \sum_{\tau = t_1}^{t_2} r_i(\tau)$, under instance $\hI$ as $\Psi_{[t_1:t_2]} (\hS; \hI)$. Additionally, the notation for this total energy under an (feasible) offline algorithm and an online algorithm with $\epsilon$ resource augmentation be denoted as $\Psi^*_{[t_1:t_2]} (\hS; \hI)$ and $\Psi^{\epsilon}_{[t_1:t_2]} (\hS; \hI)$ respectively\footnote{In general, we use superscript $^*$ to indicate variables under an (feasible) offline algorithm with original power limit $P(t)$ and maximum charging rates $\bar{r}_i$, where we use superscript $^\epsilon$ to indicate variables under the augmented resources.}.

\begin{table}[htbp]
	\normalsize
	\centering
	\rowcolors{2}{white}{gray!25}
	\caption{additional notations}
	\label{table:notations 2}
	\begin{tabularx}{0.475\textwidth}{cX}
		\hline
		\textbf{notation} & \multicolumn{1}{c}{\textbf{description}} \\
		\hline
		\hline
		$A_t$ & set of EVs have arrived by time $t$ \\
		$A_{[t_1,t_2]}$ & set of EVs arrive during time interval $[t_1,t_2]$ \\
		$D_t$ & set of EVs have either departed or finished by time $t$ \\
		$U_t$ & set of EVs that are still charging at time $t$ \\
		$\epsilon$ & fraction increase in resources \\
		$\Psi_{[t_1:t_2]} (\hS; \hI)$ & total energy supplied to the set of EVs $\hS$ during the interval $[t_1, t_2]$ under instance $\hI$ \\
		$\Psi^*_{[t_1:t_2]} (\hS; \hI)$ & $\Psi_{[t_1:t_2]} (\hS; \hI)$ under an offline algorithm  \\
		$\Psi^\epsilon_{[t_1:t_2]} (\hS; \hI)$ & $\Psi_{[t_1:t_2]} (\hS; \hI)$ under an online algorithm with $\epsilon$ resources augmentation \\
		\hline
	\end{tabularx}
\end{table}

Next, we look into the infeasibility condition of the sLLF algorithm. For a charging instance, ${\hI}=\{a_i, d_i, e_{i}, \bar{r}_{i}; P(t)\}_{i\in\hV, t\in\hT}$, that is not online feasible under the sLLF algorithm, there are times when some EVs has negative laxity; we denote the earliest among such times as $t_-$. Let $\hF = \{ i \in A_{t_-}:  \ell_i (t_- ) < 0 \}$ be the set of EVs arriving at the changing station by time $t_-$ that have negative laxity, $\hS_1= \{ i \in A_{t_-} :  \ell_i ( t_-)  \geq 0\ \&\ d_i  \leq t_- \}$ be the set of EVs with non-negative laxity that depart by time $t_-$, and $\hS_2= \{ i \in A_{t_-} :  \ell_i ( t_- )  \geq 0\ \& \  d_i >  t_- \}$ be the set of EVs with non-negative laxity that remain at the charging station at time $t_-$. Here, $\hF$, $\hS_1$, and $\hS_2$ are mutually exclusive such  $A_{t_-} = \hF \cup \hS_1 \cup \hS_2$. Then:

\begin{lemma}
	\label{lemma:laxre}
	When the sLLF algorithm is used on instance $\hI$, for any EV $i \in \hS_2$ and $j \in \hF$, their laxity satisfy
	\begin{equation} 
		\ell_i(t) > \ell_j(t),\ t \in [ \max(a_i, a_j) ,\ t_-]
	\label{eq:z_order1} 
	\end{equation}

	\begin{proof}[\textbf{Proof}]

	\noindent
	By the construction of $\hS_2$, relation \eqref{eq:z_order1} holds at time $t = t_-$. By \cref{lemma:flip_conditions}, a necessary condition for the inequality in \eqref{eq:z_order1} to flip at some time $t+1 \leq t_-$ is for the second case of \eqref{eq:flip_cases} to hold for EV $i$. However, this condition cannot hold for any EV in $\hF$ or $\hS_1$. For EVs in $\hF$, condition $e_j(t+1) = 0$ in the second case of \eqref{eq:flip_cases} cannot happen because negative laxity at some time implies the energy demand will not be fulfilled. For EVs in $\hS_1$, \eqref{eq:z_order1} holds only after $e_j(t+1) = 0$ when they have energy demand fulfilled at time $t+1$. Consequently, condition \eqref{eq:z_order1} holds for all $t \in [ \max(a_i, a_j) , t_-]$.

	\end{proof}
\end{lemma}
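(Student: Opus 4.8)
The plan is to prove \eqref{eq:z_order1} by backward induction on $t$, starting from $t = t_-$ and descending to $t = \max(a_i, a_j)$, using the persistence property in \cref{lemma:flip_conditions} to guarantee that the strict ordering of laxities is never reversed as we step one slot into the past. The base case $t = t_-$ is immediate: since $i \in \hS_2$ we have $\ell_i(t_-) \geq 0$, and since $j \in \hF$ we have $\ell_j(t_-) < 0$, so $\ell_i(t_-) > \ell_j(t_-)$.

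For the inductive step, suppose $\ell_i(t+1) > \ell_j(t+1)$ holds for some $t+1$ with $\max(a_i, a_j) < t+1 \leq t_-$, and suppose toward a contradiction that $\ell_i(t) \leq \ell_j(t)$. Since $t \geq \max(a_i, a_j)$, both EVs have already arrived, so the pair $(i,j)$ meets the hypothesis \eqref{eq:mn_condition} of \cref{lemma:flip_conditions} with EV $i$ in the role of the lower-laxity vehicle; hence one of the two alternatives in \eqref{eq:flip_cases} must hold. The first alternative requires $t \geq d_i$, but $i \in \hS_2$ forces $d_i > t_-$ while $t \leq t_- - 1 < d_i$, a contradiction. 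The second alternative requires $e_j(t+1) = 0$; however, once EV $j$'s demand has been met its remaining demand stays zero thereafter, so $\ell_j(s) = [d_j - s]^+ \geq 0$ for all $s \geq t+1$, and in particular $\ell_j(t_-) \geq 0$, contradicting $j \in \hF$. Both alternatives being impossible, we conclude $\ell_i(t) > \ell_j(t)$; iterating the step from $t_-$ down to $\max(a_i, a_j)$ yields \eqref{eq:z_order1} on the whole interval.

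The only substantive content is verifying that both escape routes of \cref{lemma:flip_conditions} are genuinely closed for this specific pair: membership of $i$ in $\hS_2$ kills the ``deadline already passed'' branch because such a vehicle is still present at $t_-$, and membership of $j$ in $\hF$ kills the ``$j$ has finished charging'' branch because a vehicle that ever completes its demand can never again exhibit negative laxity. Everything else is routine endpoint bookkeeping, including the degenerate case in which one of $i, j$ arrives exactly at $t_-$, where the interval $[\max(a_i,a_j), t_-]$ collapses to the single point already handled by the base case.
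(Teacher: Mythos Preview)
Your proof is correct and follows essentially the same approach as the paper's: establish the base case at $t_-$ from the definitions of $\hS_2$ and $\hF$, then step backward using \cref{lemma:flip_conditions} to rule out any flip of the inequality. If anything, your version is cleaner---you explicitly dispose of both alternatives in \eqref{eq:flip_cases} (the paper dismisses the $t \geq d_i$ branch only implicitly and somewhat confusingly brings $\hS_1$ into the discussion), but the underlying argument is the same.
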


\noindent
Here, the sLLF algorithm prioritizes EVs with smaller laxity so the presence of EVs with strictly greater laxity will not impact the charging of the EVs with smaller laxity. Defining $\tilde\hV=\hF \cup \hS_1$ as the set that does not contain the EVs in $\hS_2$, with the instances of this set denote as ${\tilde \hI}=\{a_i, d_i, e_{i}, \bar{r}_{i}; P(t)\}_{i\in\tilde\hV, t\in\hT}$, then the following can be obtained as a consequence of \cref{lemma:laxre}: 

\begin{corollary}
	\label{corollary:equ}
	Regardless of the actual instance being $\hI$ or $\tilde \hI$,  the EVs in $\tilde \hF$ are charged in exactly the same way under the sLLF algorithm by time $t_-$.
\end{corollary}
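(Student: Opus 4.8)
The plan is to induct on the time index $t$, with the invariant that for every $\tau\le t$ the sLLF charging rates $r_m(\tau)$ and laxities $\ell_m(\tau)$ agree under $\hI$ and under $\tilde\hI$ for every EV $m\in\tilde\hV$, carried from $t=0$ up to $t=t_-$. The base case is immediate: until an EV $m\in\tilde\hV$ has received any charge, $\ell_m(\tau)=[d_m-\tau]^+-e_m/\br_m$ (or $+\infty$), which depends only on the common data $a_m,d_m,e_m,\br_m$, and the subset of $\tilde\hV$ present with positive residual demand at time $0$ is the same in both instances.

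For the inductive step at $t<t_-$, the invariant makes the sLLF algorithm face, in the two instances, the same active EVs of $\tilde\hV$ with identical laxities and residual demands; the only difference is that under $\hI$ some EVs of $\hS_2$ may also be active. By \cref{proposition:valley_filling} the rates are the threshold values $r_m^*(t)=[\br_m(L(t)-\ell_m(t)+1)]_0^{\min(\br_m,e_m(t))}$ with $L(t)$ fixed by the balance equation \eqref{eq:threshold2}, so it suffices to establish the claim: \emph{at any $t<t_-$ for which some active EV of $\tilde\hV$ is strictly below its cap $\min(\br_m,e_m(t))$, every active EV of $\hS_2$ receives $r_i^*(t)=0$}. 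Granting the claim, two cases remain. If every active $\tilde\hV$ EV is at its cap under $\hI$, then the power these EVs use equals $\sum_{m\in\tilde{\hV}_t}\min(\br_m,e_m(t))\le P(t)$, so they are at their caps under $\tilde\hI$ too and the rates match; otherwise some $\tilde\hV$ EV is below its cap, so by the claim all active $\hS_2$ EVs get nothing, power is scarce, and \eqref{eq:threshold2} then reads as the same equation over $\tilde\hV$ in both instances, forcing the same $L(t)$ and hence the same $\tilde\hV$ rates. Either way the invariant propagates to $t+1$.

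To prove the claim, suppose some active $i\in\hS_2$ has $r_i^*(t)>0$, and let $j$ be an active EV of $\tilde\hV$ with $r_j^*(t)<\min(\br_j,e_j(t))$. Consider first $j\in\hF$. It cannot be served its full residual demand $e_j(t)$, for that would fully charge $j$ and give $\ell_j(t_-)=[d_j-t_-]^+\ge 0$, contradicting $j\in\hF$; so $j$ receives either $0$ or a strictly interior amount, while $i$ is served at its rate cap $\br_i$, at its residual-demand cap $e_i(t)$, or in the interior. Reading the threshold inequalities off \eqref{eq:threshold} in each combination forces, for the next slot, either $\ell_i(t)<\ell_j(t)$ (\eg when $i$ is at a cap and $j$ gets nothing) or $\ell_i(t+1)\le\ell_j(t+1)$ (\eg when both are interior, where $\ell_i(t+1)=L(t)=\ell_j(t+1)$, or when $i$ finishes at $t$, where $r_i^*(t)=e_i(t)$ gives $L(t)\ge\ell_i(t+1)$ while $j$ below its cap gives $\ell_j(t+1)\ge L(t)$). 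Each such conclusion contradicts the strict ordering $\ell_i(\cdot)>\ell_j(\cdot)$ on $[\max(a_i,a_j),t_-]$ supplied by \cref{lemma:laxre}, invoked at $t$ or at $t+1$ (both legitimate since $t<t_-$). For $j\in\hS_1$ the same scheme works once one knows that an $\hS_2$ EV never has laxity strictly below that of an active $\hS_1$ EV before $t_-$; this is a short consequence of \cref{lemma:flip_conditions}, obtained by noting that any $k\in\hS_1$ is fully charged by its departure so that $\ell_k(t_-)=0\le\ell_i(t_-)$, and then propagating this ordering backward in time. This proves the claim, and with it the corollary.

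The step I expect to be the main obstacle is precisely this case analysis, and within it the subcase where a nearly-finished $\hS_2$ EV empties its demand in a single slot: superficially it consumes scarce power that an $\hF$ EV could have used, and the resolution is the inequality $L(t)\ge\ell_i(t+1)$ forced by $r_i^*(t)=e_i(t)$, after which \cref{lemma:laxre} at $t+1$ closes the gap. A secondary loose end is the $\hS_1$ half of the statement, since \cref{lemma:laxre} orders only $\hS_2$ against $\hF$ and so the backward-monotonicity argument above must be carried out in full.
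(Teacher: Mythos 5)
Your induction-plus-coupling framework and the $\hF$ half of your key claim are sound, and in fact considerably sharper than what the paper offers: the paper ``proves'' this corollary with only the single sentence preceding it (sLLF prioritizes EVs with smaller laxity, so EVs with strictly greater laxity do not impact them), resting entirely on \cref{lemma:laxre}. Your derivation that an active $i\in\hS_2$ with $r_i^*(t)>0$ coexisting with an active $j\in\hF$ below its cap forces $\ell_i(t+1)\le L(t)\le\ell_j(t+1)$, contradicting \cref{lemma:laxre} at time $t+1\le t_-$, is exactly the rigorous content that sentence gestures at.

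The gap is the $\hS_1$ case, and it cannot be closed by the route you sketch. You propose to propagate the ordering $\ell_i(\cdot)\ge\ell_k(\cdot)$ (for $i\in\hS_2$, $k\in\hS_1$) backward from $t_-$ using \cref{lemma:flip_conditions}. But the second case of \eqref{eq:flip_cases} permits the ordering to flip from $\ell_i\le\ell_k$ to $\ell_i>\ell_k$ precisely at a slot $s$ with $e_k(s+1)=0$ --- and every $k\in\hS_1$ does realize $e_k=0$ at some time $\le d_k\le t_-$, since $\ell_k(t_-)\ge 0$ together with $d_k\le t_-$ forces $k$ to be fully charged. So the backward propagation halts at the slot where $k$ finishes, and before that slot nothing prevents $\ell_k(\tau)>\ell_i(\tau)$. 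In that regime sLLF deliberately serves the lower-laxity $\hS_2$ EV $i$ ahead of $k$, so $k$ can be active, below its cap, and starved while $i$ receives positive rate: your claim fails for $j=k\in\hS_1$, and with it the inductive invariant. This is not merely a hole in the write-up: removing $i$ when passing to $\tilde\hI$ then genuinely reroutes power to $k$ at that slot, so $k$ is \emph{not} charged ``in exactly the same way,'' and because $k$ then finishes earlier it can in turn alter the competition faced by $\hF$ EVs at later slots. The corollary as literally stated --- and the paper's one-line justification, which controls only the direct $\hS_2$-versus-$\hF$ interaction --- does not address this indirect channel through $\hS_1$; what survives without further argument is only a weaker statement such as the invariance of the aggregate $\Psi_{[0:t_-]}(\hS_1;\cdot)$, which equals $\sum_{k\in\hS_1}e_k$ in both instances because every $\hS_1$ EV is fully charged by $t_-$ either way.
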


\noindent
The above condition for the sLLF algorithm being infeasible on some online feasible instances also holds for $\epsilon$ resource augmentation (both power and power+rate augmentation).

Now, consider comparing the sLLF algorithm with $\epsilon$ resource augmentation (either power or power+rate augmentation) and an offline algorithm. Let $\hI$ be an EV charging instance that are offline feasible and the sets $\hF$, $\hS_1$, and $\hS_2$ are defined under the sLLF algorithm. Since the EVs in $\hS_1$ are fully charged by time $t_-$ under both the sLLF algorithm with resource augmentation and the offline algorithm, we can have

\begin{equation}
	\label{eq:fes1}
	\Psi^\epsilon_{[0:t_-]} (\hS_1; \hI) = \Psi^*_{[0:t_-]} (\hS_1; \hI)
\end{equation}

\noindent
To be feasible, it is necessary for an algorithm to maintain $\ell_i(t) \geq 0,\ \forall t\in\hT$. Thus, for an EV $i \in \hF$, the offline algorithm must maintain $\ell_i(t_-) \geq 0$. Given that laxity is monotonically decreasing in the remaining energy demand, $e_i(t)$, the total energy fulfilled during the time interval $[0,t_-]$ under the offline algorithm must be strictly greater than that with the sLLF algorithm:

\begin{align*}
	& & & \Psi^\epsilon_{[0:t_-]} (\{i\}; \hI) & & < & & \Psi^*_{[0:t_-]} (\{i\}; \hI), & & i \in \hF \\
	\Rightarrow & & & \Psi^\epsilon_{[0:t_-]} (\hF; \hI) & & < & & \Psi^*_{[0:t_-]} (\hF; \hI) & & \eqnumber \label{eq:fes2} \\
	\intertext{for $\tilde \hV = \hV \backslash S_2$, together with \eqref{eq:fes1}, we have}
	& & & \Psi^\epsilon_{[0:t_-]} (\tilde\hV; \hI) & & < & & \Psi^*_{[0:t_-]} (\tilde\hV; \hI) \eqnumber \label{eq:fevt}
\end{align*}

\noindent
Additionally \cref{corollary:equ} implies

\begin{align*}
	& & & Psi^\epsilon_{[0:t_-]} ({i}; \hI) & & = & & \Psi^\epsilon_{[0:t_-]} ({i}; \tilde \hI), & & i \in \tilde \hV \\
	\Rightarrow & & & \Psi^\epsilon_{[0:t_-]} (\tilde \hV; \hI) & & = & & \Psi^\epsilon_{[0:t_-]} (\tilde \hV; \tilde \hI) \eqnumber
	\label{eq:fee}
\end{align*}

\noindent
Furthermore, because the charging instance $\hI$ is offline feasible then its sub-instance $\tilde\hI$ is also offline feasible. Similar to \cref{eq:fes1,eq:fes2,eq:fevt}, we can show that
 
\begin{align}
	& & & \Psi^\epsilon_{[0:t_-]} (\hS_1; \tilde\hI) & & = & & \Psi^*_{[0:t_-]} (\hS_1; \tilde \hI) \label{eq:fes1t} \\
	& & & \Psi^\epsilon_{[0:t_-]} (\hF; \tilde\hI) & & < & & \Psi^*_{[0:t_-]} (\hF; \tilde\hI) \label{eq:feft} \\
	& & & \Psi^\epsilon_{[0:t_-]} (\tilde\hV; \tilde \hI) & & < & & \Psi^*_{[0:t_-]} (\tilde\hV; \tilde\hI) \label{eq:fevtt}
\end{align}

\subsection{Proof of \cref{thm:power_aug_LLF}}
\label{apdxS:proof of th 1}

Consider the use of the sLLF algorithm on an offline feasible instance ${\hI}=\{a_i, d_i, e_{i}, \bar{r}_{i}; P(t)\}_{i\in\hV, t\in\hT}$ under $\epsilon$-power augmented resources. 

\begin{align}
	\intertext{Let}
	n = (1+\epsilon) \frac{ P_{\min} }{ P_{\max} }
	\label{eq:n_def}
\end{align}

\noindent
For $m \leq n$, we define the earliest time to charge at a power greater than $mP_{\max}$ for the rest of the time until $t_-$ as 

\begin{equation}
	\label{eq:t_i_define}
	\hspace{-3mm}t_{m} = \min \left\{ t \in \hT : \sum_{j \in V_t } \min ( \br_j , e_j(\tau)  )  \geq m P_{\max} , \tau \in [t , t_-] \right\}
\end{equation}

\noindent
Let $T_{m}= [t_{m-1} , t_{m} )$ and $\hat{T}_{m} =  [t_{m} , t_-] $ and denote their lengths by $|T_{m}|$ and $|\hat T_{m}|$. Also,

\begin{lemma}
	\label{lemma:TandD}
	For any integer $i \leq n-1$, the following two relations hold: 
	
	\begin{align}
		\label{eq:delta_5}
		\Psi^*_{[0:t_i]} (A_{T_i}; \tilde\hI) -  \Psi^{\epsilon}_{[0:t_i]} (A_{T_i}; \tilde\hI) & & > & & & P_{\max}  |\hat T_{i+1}| \\
		\label{eq:TandD}
		|T_{i}| & & > & & &|\hat T_{i+1}|
	\end{align}

	\begin{proof}[\textbf{Proof}]
	
	\noindent
	From \cref{eq:t_i_define} we can have

	\begin{align*}
		\sum_{ j \in V_{ ( t_{i - 1} ) - 1 } } \min ( \br_j , e_j( t_{i - 1} - 1 )  ) & & < & & (i-1)P_{\max}
	\end{align*}
	
	\noindent
	This implies that the EVs that have arrived before $t_{i-1}$ are charged at a total power of at most $(i-1)P_{\max}$ at $t_{i-1}$ and after. On the other hand, from \cref{eq:t_i_define}, the total power supply is at least $i P_{\max}$ during the interval $T_{i+1} = [t_{i} , t_{i+1}]$. Therefore, the total charging power to the EVs that arrive after $t_{i-1}$ is at least $P_{\max}$ during $T_{i+1}$. Since offline algorithm can only use a power of at most $P_{\max}$, for the EVs that arrive after $t_{i-1}$ we obtain

	\begin{equation}
		\label{eq:delta_1}
		\begin{aligned}
			& & & \Psi^*_{[0;t_{i+1}]} (A_{\hat T_{i-1}}; \tilde\hI) - \Psi^{\epsilon}_{[0;t_{i+1}]} (A_{\hat T_{i-1}}; \tilde\hI)\\
			< & & & \Psi^*_{[0;t_{i}]} (A_{\hat T_{i-1}}; \tilde\hI) - \Psi^{\epsilon}_{[0;t_{i}]} (A_{\hat T_{i-1}}; \tilde\hI)
		\end{aligned}
	\end{equation}
	
	\noindent
	The same argument can be applied to the interval $\hat T_{i+1} = [t_{i+1} , t_-]$. From \cref{eq:t_i_define}, the total charging power is at least $(i+1)  P_{\max}$ during $\hat T_{i+1}$. Therefore, during $\hat T_{i+1} $, the total charging power to the EVs that arrive after $t_{i-1}$ is at least $2 P_{\max}$. Since offline algorithm can only use a power of at most $P_{\max}$, the total energy supply to EVs in $\hat T_{i-1}$ under the augmented resources is greater than that without augmented resources, \ie:
 
	\begin{equation}
		\label{eq:delta_2}
		\begin{aligned}
			0 & & < & & &\Psi^*_{[0;t_{-}]} (A_{\hat T_{i-1}}; \tilde\hI) - \Psi^{\epsilon}_{[0;t_{-}]} (A_{\hat T_{i-1}}; \tilde\hI)\\
			& & < & & & \Psi^*_{[0;t_{i+1}]} (A_{\hat T_{i-1}}; \tilde\hI) - \Psi^{\epsilon}_{[0;t_{i+1}]} (A_{\hat T_{i-1}}; \tilde\hI) - P_{\max} |\hat T_{i+1}|
		\end{aligned}
	\end{equation}
	
	\noindent
	Combining \cref{eq:delta_1,eq:delta_2}, we have

	\begin{align*}
		& & 0 & & < & & & \Psi^*_{[0;t_{i+1}]} (A_{\hat T_{i-1}}; \tilde\hI) - \Psi^{\epsilon}_{[0;t_{i+1}]} (A_{\hat T_{i-1}}; \tilde\hI) \\
		& & & & & & & - P_{\max} |\hat T_{i+1}| \\
		\Leftrightarrow & & P_{\max} |\hat T_{i+1}| & & < & & & \Psi^*_{[0;t_{i+1}]} (A_{\hat T_{i-1}}; \tilde\hI) - \Psi^{\epsilon}_{[0;t_{i+1}]} (A_{\hat T_{i-1}}; \tilde\hI) \\
		& & & & < & & & \Psi^*_{[0;t_{i}]} (A_{\hat T_{i-1}}; \tilde\hI) - \Psi^{\epsilon}_{[0;t_{i}]} (A_{\hat T_{i-1}}; \tilde\hI) \\
		\intertext{since the set $A_{T_{i}}$ is identical to the subset of $A_{\hat T_{i-1}}$ that contains only the EVs that have arrived by $t_i$,} \\
		& & P_{\max} |\hat T_{i+1}| & & < & & & \Psi^*_{[0;t_{i}]} (A_{\hat T_{i-1}}; \tilde\hI) - \Psi^{\epsilon}_{[0;t_{i}]} (A_{\hat T_{i-1}}; \tilde\hI) \\
		& & & & = & & & \Psi^*_{[0;t_{i}]} (A_{\hat T_{i-1}}; \tilde\hI) - \Psi^{\epsilon}_{[0;t_{i}]} (A_{\hat T_{i-1}}; \tilde\hI) \\
		\intertext{therefore \cref{eq:delta_5}.}
	\end{align*}
	
	\noindent
	Finally, as all EVs in $A_{T_i}$ arrives after $t_{i-1}$, during $T_i$ the offline algorithm can charge a total energy of at most 
	$|T_i| P_{\max}$. Thus,
	\begin{align*}
		& & P_{\max} |T_i| & & \geq & & & \Psi^*_{[0:t_i]} (A_{T_i}; \tilde\hI) -  \Psi^{\epsilon}_{[0:t_i]} (A_{T_i}; \tilde\hI) \\
		\intertext{where from \cref{eq:delta_5}}
		& & P_{\max} |T_i| & & \geq & & & \Psi^*_{[0:t_i]} (A_{T_i}; \tilde\hI) -  \Psi^{\epsilon}_{[0:t_i]} (A_{T_i}; \tilde\hI) \\
		& & & & > & & & P_{\max} |\hat T_{i+1}| \\
		\Rightarrow & & P_{\max} |T_i| & & > & & & P_{\max} |\hat T_{i+1}| \\
		\Leftrightarrow & & |T_i| & & > & & & |\hat T_{i+1}| \\
		\intertext{therefore \cref{eq:TandD}.}
	\end{align*}
 
	\end{proof}

\end{lemma}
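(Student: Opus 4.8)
The plan is an amortized energy-accounting argument on the sub-instance $\tilde\hI$ of \cref{corollary:equ}. For a set $\hS$ of EVs I track the \emph{deficit} $\Psi^*_{[0:t]}(\hS;\tilde\hI)-\Psi^\epsilon_{[0:t]}(\hS;\tilde\hI)$ of the $\epsilon$-power-augmented sLLF run relative to a feasible offline schedule, evaluated at the checkpoints $t_{i-1}$, $t_i$, $t_{i+1}$ and $t_-$, and I exploit the interplay between the threshold structure of the times $t_m$ in \cref{eq:t_i_define}, the offline power cap $P(\tau)\le P_{\max}$, and the augmented supply $(1+\epsilon)P(\tau)\ge(1+\epsilon)P_{\min}=nP_{\max}$ together with $n\ge i+1$ (which holds because $i\le n-1$, recall \cref{eq:n_def}).

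First I would show that the EVs arriving before $t_{i-1}$ are lightly loaded along the augmented run. By minimality of $t_{i-1}$ in \cref{eq:t_i_define}, the aggregate remaining demand at $t_{i-1}-1$ obeys $\sum_{j\in V_{t_{i-1}-1}}\min(\br_j,e_j(t_{i-1}-1))<(i-1)P_{\max}$, since otherwise $t_{i-1}-1$ would already satisfy the threshold condition on $[t_{i-1}-1,t_-]$. As remaining demand is non-increasing and every EV with $a_j<t_{i-1}$ that is present at some $t\ge t_{i-1}$ was present at $t_{i-1}-1$, the augmented run delivers less than $(i-1)P_{\max}$ of total power per slot to the EVs with $a_j<t_{i-1}$ at every $t\ge t_{i-1}$. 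On $T_{i+1}=[t_i,t_{i+1})\subseteq[t_i,t_-]$ the aggregate demand is $\ge iP_{\max}$, so — since $(1+\epsilon)P(\tau)\ge nP_{\max}>iP_{\max}$ — the augmented run delivers at least $iP_{\max}$ per slot there, of which less than $(i-1)P_{\max}$ can go to the old EVs, hence more than $P_{\max}$ per slot goes to the EVs of $A_{\hat T_{i-1}}$; likewise, on $\hat T_{i+1}=[t_{i+1},t_-]$ the aggregate demand is $\ge(i+1)P_{\max}$ and the same accounting (again using $n\ge i+1$) gives more than $2P_{\max}$ per slot to the EVs of $A_{\hat T_{i-1}}$. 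These lower bounds are distribution-free: they use only ``total delivered $\ge$ threshold'' minus ``delivered-to-old $\le$ old-demand'', never the way sLLF splits power.

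Since the offline schedule delivers at most $P(\tau)\le P_{\max}$ total per slot, \ie at most $P_{\max}$ to $A_{\hat T_{i-1}}$, the augmented run out-delivers it on $A_{\hat T_{i-1}}$ across $T_{i+1}$ (this is \cref{eq:delta_1}), and out-delivers it by more than $P_{\max}$ per slot across $\hat T_{i+1}$, so the deficit of $A_{\hat T_{i-1}}$ at $t_{i+1}$ exceeds its deficit at $t_-$ by more than $P_{\max}|\hat T_{i+1}|$. Because $A_{\hat T_{i-1}}\subseteq\hF\cup\hS_1$, with the offline schedule fully charging all of them while the augmented run under-charges the EVs of $\hF$ (they reach negative laxity, cf.\ \cref{proposition:feasibility condition}), the deficit at $t_-$ is nonnegative; combining this with \cref{eq:delta_1} and noting that among $A_{\hat T_{i-1}}$ only $A_{T_i}$ is present by time $t_i$ yields \cref{eq:delta_5}. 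Finally, the EVs of $A_{T_i}$ arrive within $T_i=[t_{i-1},t_i)$, hence are present for at most $|T_i|$ slots within $[0,t_i]$, so the offline cap gives $\Psi^*_{[0:t_i]}(A_{T_i};\tilde\hI)\le P_{\max}|T_i|$ while $\Psi^\epsilon_{[0:t_i]}(A_{T_i};\tilde\hI)\ge0$; plugging into \cref{eq:delta_5} gives $P_{\max}|T_i|>P_{\max}|\hat T_{i+1}|$, which is \cref{eq:TandD}.

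The step I expect to be the main obstacle is the conversion in the second paragraph: the thresholds $t_m$ are defined solely along the augmented sLLF trajectory, so turning them into two-sided control of how much power that trajectory routes to the old versus the new EVs takes care — one must check that the cohort present after $t_{i-1}$ lies inside the cohort present at $t_{i-1}-1$ (so monotonicity of remaining demand applies), that $n\ge i+1$ keeps the augmented power from ever becoming the binding constraint below the threshold, and that the ``more than $P_{\max}$'' and ``more than $2P_{\max}$'' bounds hold without reference to the sLLF allocation rule. Everything after that is routine bookkeeping with the offline power cap and the sign of the terminal deficit.
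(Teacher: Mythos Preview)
Your proposal is correct and follows essentially the same approach as the paper's proof: both use the minimality of $t_{i-1}$ in \cref{eq:t_i_define} to cap the per-slot power delivered by the augmented run to the ``old'' EVs at $(i-1)P_{\max}$, subtract this from the threshold lower bounds $iP_{\max}$ on $T_{i+1}$ and $(i+1)P_{\max}$ on $\hat T_{i+1}$ to obtain per-slot surpluses of $P_{\max}$ and $2P_{\max}$ for the ``new'' EVs $A_{\hat T_{i-1}}$, compare against the offline cap $P_{\max}$ to produce \cref{eq:delta_1} and \cref{eq:delta_2}, chain these with the nonnegativity of the terminal deficit and the identification $A_{T_i}=\{j\in A_{\hat T_{i-1}}:a_j\le t_i\}$ to reach \cref{eq:delta_5}, and finally bound $\Psi^*_{[0:t_i]}(A_{T_i};\tilde\hI)\le P_{\max}|T_i|$ to deduce \cref{eq:TandD}. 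If anything, you are more explicit than the paper on two points it leaves implicit: the role of $i\le n-1$ (so that $(1+\epsilon)P_{\min}=nP_{\max}\ge(i+1)P_{\max}$ keeps the augmented power from binding below the threshold), and the justification that the terminal deficit on $A_{\hat T_{i-1}}\subseteq\hF\cup\hS_1$ is nonnegative.
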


\begin{proof}[\textbf{Proof (\cref{thm:power_aug_LLF})}]

Suppose that there exists an offline feasible instance ${\hI}=\{a_i, d_i, e_{i},$ $ \bar{r}_{i}; P(t)\}_{i\in\hV, t\in\hT}$ such that the sLLF algorithm is not feasible with $\epsilon$-power augmented resources. Then, from the infeasibility condition of the sLLF algorithm defined previously, there exists another offline feasible instance ${\tilde \hI}=\{a_i, d_i, e_{i}, \bar{r}_{i}; P(t)\}_{i\in\tilde \hV, t\in\hT}$ such that \cref{eq:fevtt}. When $m=1$, we obtain $\sum_{ j \in V_{t_1 - 1} } \min ( \br_j , e_j(t_1 - 1)  )  < P_{\max}$. Let $\hS = \{ i \in A_{T_1} : e_i( t_1 ) > 0 \} \subset A_{T_1}$ denotes the set of EVs that arrive during $T_1$ and have not yet been fully charged by $t_1$. Because the number of EVs is upper bounded by $\frac{P_{\max}}{\br_{\min}}$ (from \cref{eq:t_i_define}) and the EVs in $A_{T_1} \backslash S$ are all fully charged, then
 
\begin{align*}
	P_{\max} |\hat T_2 | & & \leq & & & \Psi^*_{[0:t_-]} (A_{T_1}; \tilde\hI) -  \Psi^{\epsilon}_{[0:t_-]} (A_{T_1}; \tilde\hI)\\
	& & = & & & \Psi^*_{[0:t_-]} (\hS; \tilde\hI) -  \Psi^{\epsilon}_{[0:t_-]} (\hS; \tilde\hI)\\
	& & \leq & & & X \frac{P_{\max}}{\br_{\min}}
\end{align*}

\noindent
This leads to 

\begin{equation}
	\label{eq:D_bound1}
	|\hat T_2|   < \frac{ X }{ \br_{\min}  }  .
\end{equation}

\noindent
At time $t < t_{m - 1}$, we have

\begin{align*}
	\sum_{ j \in V_{ t_{m - 1} - 1 } } \min ( \br_j , e_j( t_{m - 1} - 1)  )  & & < & & (m- 1)P_{\max}
\end{align*}

\noindent
which implies that there are at most $(m- 1)  \frac{P_{\max}}{\br_{\min}}$ EVs with unfulfilled energy demand by time $t_f$. Meanwhile, at time $t \geq t_m$, we have

\begin{align*}
	\sum_{ j \in V_{t_{m} } } \min ( \br_j , e_j(t_{m})  ) & & \geq & & m P_{\max}
\end{align*}

\noindent
which implies that there are at least $m \frac{P_{\max}}{\br_{\max}}$ EVs with unfulfilled energy demand during $T_{m-1}$. Therefore, the number of EVs that arrive during $[t_{m-1} , t_{m}]$ is greater than the following:

\begin{equation}
	\label{eq:D_bound1_2}
	\frac{ m P_{\max} }{ \br_{\max} } - \frac{ (m- 1)  P_{\max} }{ \br_{\min} } \geq \frac{P_{\max} }{ \br_{\min} }.
\end{equation} 

\noindent
Since the inter-arrival periods of EVs are at least $N$, the length of $\hat T_{m-1}$ satisfies

\begin{equation}
	\label{eq:Dbound_2}
	|\hat T_{m-1}| \geq  \frac{P_{\max}N }{ \br_{\min} }  .
\end{equation}

\noindent
Consider the following recursion: 

\begin{align*}
|\hat T_{2}| & & = & & &  |\hat T_{3}|+  |T_{3}| \\
 & & \geq & & & |\hat T_{3}|+  |\hat T_{4}| & & \geq & & 2  |\hat T_{4}|+  |\hat T_{5}| \\
& & \geq & & & 3  |\hat T_{5}|+ 2 |\hat T_{6}| & & \geq & & 5  |\hat T_{6}|+ 3 |\hat T_{7}| \\
& & \geq & & & \cdots  & & \geq & & f_{k-2} |\hat T_{m-1}| + f_{k-3} |\hat T_{m}| 
\end{align*}

\noindent
where $f_k$ is the Fibonacci sequence defined by $ f_1 = 1$, $f_2 = 1$ and $f_{k} = f_{k-1} + f_{k-2}$ for $k\geq 3$. Thus,
 
\begin{equation*}
	 |\hat T_{2}|   > f_{m-2}  |\hat T_{m-1}|.
\end{equation*}

\noindent
Combining \cref{eq:D_bound1,eq:D_bound1_2,eq:Dbound_2} gives
 
\begin{equation}
	\label{eq:XN_ineq}
	\frac{ X }{ \br_{\min} } \ > \ |T_{2} | \ > \ f_{m-2}  |\hat T_{m-1}| \ > \ f_{m-2}  \frac{P_{\max} N}{ \br_{\min}  } 
\end{equation}

\noindent
From $m \leq n$ for $n$ defined in \cref{eq:n_def}, we obtain

\begin{align*}
	 \left \lfloor (1 + \epsilon) \frac{ P_{\min} }{ P_{\max} } \right\rfloor - 2  & & = & & & m-2  \\
	& & = & & & \log_\varphi \left(  \sqrt{ 5 }f_{n-2} + \frac{1}{2} \right) \\
	& & < & & & \log_\varphi \left(   \frac{ \sqrt{ 5 } X } { N P_{\max}  } + \frac{1}{2} \right)
\end{align*}

\noindent
which gives \cref{thm:power_aug_LLF}

\end{proof}

\subsection{Proof of \cref{thm:achievable_rate}}
\label{apdxS:proof of th 2}

\begin{proof}[\textbf{Proof (\cref{thm:achievable_rate})}]

Suppose that there exists an instance ${\hI}=\{a_i, d_i, e_{i}, \bar{r}_{i}; P(t)\}_{i\in\hV, t\in\hT}$ such that the sLLF algorithm is not feasible with $\epsilon$-power+rate augmentation. We then have \cref{eq:fevtt}, for another instance ${\tilde\hI}=\{a_i, d_i, e_{i}, \bar{r}_{i}; P(t)\}_{i\in\tilde \hV, t\in\hT}$. Let $\hS(\tilde \hV)$ be the set of EVs in the instance $\tilde \hI$ that receive strictly less energy under online algorithm than under offline algorithm by some time $t$ at which $\Psi^\epsilon_{[0:t]} ( \tilde \hV; \tilde\hI ) <   \Psi^*_{[0:t]} (\tilde  \hV; \tilde \hI )$:

\begin{align*}
	\hS(\tilde \hV) = \big\{ i  \in \tilde \hV: \exists t\in\hT & & \text{s.t.} & & & \Psi^\epsilon_{[0:t]} ( \{i\};  \tilde \hI ) <  \Psi^*_{[0:t]} ( \{i\};  \tilde \hI ) \\
	& & \& & & & \Psi^\epsilon_{[0:t]} ( \tilde \hV;   \tilde \hI) <  \Psi^*_{[0:t]} ( \tilde \hV;  \tilde \hI ) \big\}
\end{align*} 

\noindent
In view of \cref{eq:fevtt}, $\hS(\tilde \hV) \neq \emptyset$. Consider EV $j$ that arrives earliest among those in $\hS(\tilde \hV)$. There exists a time $t \in [a_j, d_j]$ such that

\begin{align*}
	& & & \Psi^\epsilon_{[0:t]}(\{j\}; \tilde\hI) & & < & & \Psi^*_{[0:t]}(\{j\}; \tilde\hI), & j \in \tilde\hV \\
	\Rightarrow & & & \Psi^\epsilon_{[0:t]}(\tilde\hV; \tilde\hI) & & < & & \Psi^*_{[0:t]}(\tilde\hV; \tilde\hI)
\end{align*}

\noindent
Notice that $\Psi^\epsilon_{[0:a_j - 1]}(\tilde \hV; \tilde \hI ) < \Psi^*_{[0:a_j - 1]}(\tilde \hV; \tilde \hI)$ can only happen when there is another EV in $\hS(\tilde\hV)$ that arrives before EV $j$, which, however, contradicts the definitions of $\hS(\tilde \hV)$ and $j$. Thus, 

\begin{align*}
	\Psi^\epsilon_{[0:a_j - 1]}(\tilde \hV; \tilde\hI) & & \geq & & & \Psi^*_{[0:a_j - 1]}(\tilde \hV; \tilde\hI) \\
	\intertext{which implies} \\
	\Psi_{[a_j:t]}^\epsilon (\tilde \hV; \tilde\hI) & & < & & & \Psi^*_{[a_j:t]}(\tilde \hV; \tilde\hI) \eqnumber \label{eq:S_tot} 
\end{align*}

\noindent
Now, let us take a look at the energy demand fulfilled during the interval $[a_j, t]$ under the sLLF algorithm with $\epsilon$-power+rate augmentation. Define the over-loaded times as

\begin{equation*}
	T_o = \left\{ t \in [a_j, t] : \sum_{i \in \tilde \hV} r_i(t) = (1+\epsilon) P(t) \right\}
\end{equation*}

\noindent
and under-loaded times as

\begin{equation*}
	T_u = \left\{ t \in [a_j, t]: \sum_{i \in \tilde \hV} r_i(t) < (1+\epsilon) P(t) \right\} 
\end{equation*}

\noindent
we have $|T_o|+ |T_u| = t +1 - a_j$. The total energy demand fulfilled during the over-loaded period is lower bounded by $|T_o| (1+\epsilon) \min_{\tau \in [a_j , d_j]}P(\tau)$, while during the under-loaded period is at least $|T_u|  (1+\epsilon) \br_j$. Hence, the total and individual energy demands fulfilled during $[a_j, t]$ are lower bounded by
 
\begin{align}
	\label{eq:energy_rel_1}
	(1+\epsilon) \left( |T_u|   \br_j + |T_o|  \min_{\tau \in [a_j , d_j]}P(\tau) \right) & & \leq & & \Psi^\epsilon_{[a_j:t]} (\tilde \hV; \tilde \hI) \\
	\label{eq:energy_rel_4}
	(1+\epsilon) |T_u|  \br_j  & & \leq & & \Psi^\epsilon_{[a_j:t]} (\{j\}; \tilde \hI) 
\end{align}

\noindent
Next, let us take a look at the energy demand fulfilled during the interval $[a_j, t+1]$ by offline algorithm without resource augmentation. The total energy fulfilled is upper bounded by

\begin{equation}
	\Psi^*_{[a_j:t]} (\tilde \hV; \tilde \hI) \leq (t+1-a_j) \max_{ \tau \in [a_j , d_j] } P(\tau) \eqnumber \label{eq:energy_rel_2}
\end{equation}

\noindent
while the energy fulfilled to individual EV $j$ is upper bounded by

\begin{equation}
	\label{eq:energy_rel_3} 
	\Psi^*_{a_j:t} (j)  \leq (t+1-a_j) \br_j 
\end{equation}

\noindent
From \cref{eq:S_tot,eq:energy_rel_4,eq:energy_rel_3}, we have

\begin{align*}
	& & & (1+\epsilon) |T_u|  \br_j  & & \leq & & \Psi^\epsilon_{[a_j:t]} (\{j\}; \tilde \hI) \\
	& & & & & < & & \Psi^*_{a_j:t} (j) \\
	& & & & & \leq & & (t+1-a_j) \br_j \\
	\Rightarrow & & & |T_u| (1+\epsilon) & & < & & (t-a_j+1) \eqnumber \label{eq:rate_eq1} 
\end{align*} 

\noindent
while from \cref{eq:S_tot,eq:energy_rel_1,eq:energy_rel_2}, we have

\begin{align*}  
	& & & (1+\epsilon) \left( |T_u|   \br_j + |T_o|  \min_{\tau \in [a_j , d_j]}P(\tau) \right) \\
	& \leq & & \Psi^\epsilon_{[a_j:t]} (\tilde \hV; \tilde \hI) \\
	& < & & \Psi^*_{a_j:t} (j) \\
	& \leq & & (t+1-a_j) \max_{ \tau \in [a_j , d_j] } P(\tau)
\end{align*}

\noindent
thus,

\begin{equation}
	\label{eq:rate_eq2}
	(1+\epsilon) \left( |T_u |\br_j + |T_o| \min_{ \tau \in [a_j , d_j] } P(\tau) \right) < (t+1-a_j) \max_{ \tau \in [a_j , d_j] } P(\tau)
\end{equation}

\noindent
Combining both \cref{eq:rate_eq1,eq:rate_eq2},

\begin{align*}
	& & & (|T_u | + |T_o|) (1+\epsilon)  \min_{ \tau \in [a_j , d_j] } P(\tau) \\
	& < & & (t - a_j+1) (  \max_{ \tau \in [a_j , d_j] } P(\tau)  +  \min_{ \tau \in [a_j , d_j] } P(\tau)  -  \br_j) 
\end{align*}

\noindent
while $|T_o| + |T_u| = t +1 - a_j$. Thus,
 
\begin{equation*}
	\epsilon < \max_{ \tau_1 , \tau_2 \in [ a_j, d_j ]} \frac{ P(\tau_1) }{ P (\tau_2) }  - \min_{i \in \hV}  \max_{\tau \in [a_i, d_i]} \frac{  \br_i }{ P (\tau) }
\end{equation*}

\end{proof}

%\appendices
%\section{Proof of the First Zonklar Equation}
%Appendix one text goes here.
%
%% you can choose not to have a title for an appendix
%% if you want by leaving the argument blank
%\section{}
%Appendix two text goes here.
%
%
% use section* for acknowledgment
%\section*{Acknowledgement}
%\label{S:Acknowledgement}

%The material presented here is taken from a conference proceeding .....

% Can use something like this to put references on a page
% by themselves when using endfloat and the captionsoff option.
\ifCLASSOPTIONcaptionsoff
  \newpage
\fi

% trigger a \newpage just before the given reference
% number - used to balance the columns on the last page
% adjust value as needed - may need to be readjusted if
% the document is modified later
%\IEEEtriggeratref{8}
% The "triggered" command can be changed if desired:
%\IEEEtriggercmd{\enlargethispage{-5in}}

% references section

% can use a bibliography generated by BibTeX as a .bbl file
% BibTeX documentation can be easily obtained at:
% http://mirror.ctan.org/biblio/bibtex/contrib/doc/
% The IEEEtran BibTeX style support page is at:
% http://www.michaelshell.org/tex/ieeetran/bibtex/
%\bibliographystyle{IEEEtran}
% argument is your BibTeX string definitions and bibliography database(s)
%\bibliography{IEEEabrv,../bib/paper}
%
% <OR> manually copy in the resultant .bbl file
% set second argument of \begin to the number of references
% (used to reserve space for the reference number labels box)
% \begin{thebibliography}{1}

% \bibitem{IEEEhowto:kopka}
% H.~Kopka and P.~W. Daly, \emph{A Guide to \LaTeX}, 3rd~ed.\hskip 1em plus
%   0.5em minus 0.4em\relax Harlow, England: Addison-Wesley, 1999.

% \end{thebibliography}
\bibliographystyle{IEEEtran}
\bibliography{sigproc}

% biography section
% 
% If you have an EPS/PDF photo (graphicx package needed) extra braces are
% needed around the contents of the optional argument to biography to prevent
% the LaTeX parser from getting confused when it sees the complicated
% \includegraphics command within an optional argument. (You could create
% your own custom macro containing the \includegraphics command to make things
% simpler here.)
%\begin{IEEEbiography}[{\includegraphics[width=1in,height=1.25in,clip,keepaspectratio]{mshell}}]{Michael Shell}
% or if you just want to reserve a space for a photo:

%\begin{IEEEbiography}{Michael Shell}
%Biography text here.
%\end{IEEEbiography}

% if you will not have a photo at all:
%\begin{IEEEbiographynophoto}{John Doe}
%Biography text here.
%\end{IEEEbiographynophoto}

% insert where needed to balance the two columns on the last page with
% biographies
%\newpage

%\begin{IEEEbiographynophoto}{Jane Doe}
%Biography text here.
%\end{IEEEbiographynophoto}

% You can push biographies down or up by placing
% a \vfill before or after them. The appropriate
% use of \vfill depends on what kind of text is
% on the last page and whether or not the columns
% are being equalized.

%\vfill

% Can be used to pull up biographies so that the bottom of the last one
% is flush with the other column.
%\enlargethispage{-5in}

% that's all folks
\end{document}